  \let\oldparagraph\paragraph
  \renewcommand{\paragraph}{
    \@ifstar
      \xxxParagraphStar
      \xxxParagraphNoStar
  }
  \newcommand{\xxxParagraphStar}[1]{\oldparagraph*{#1}\mbox{}}
  \newcommand{\xxxParagraphNoStar}[1]{\oldparagraph{#1}\mbox{}}
  \let\oldsubparagraph\subparagraph
  \renewcommand{\subparagraph}{
    \@ifstar
      \xxxSubParagraphStar
      \xxxSubParagraphNoStar
  }
  \newcommand{\xxxSubParagraphStar}[1]{\oldsubparagraph*{#1}\mbox{}}
  \newcommand{\xxxSubParagraphNoStar}[1]{\oldsubparagraph{#1}\mbox{}}
\patchcmd\longtable{\par}{\if@noskipsec\mbox{}\fi\par}{}{}
\def\maxwidth{\ifdim\Gin@nat@width>\linewidth\linewidth\else\Gin@nat@width\fi}
\def\maxheight{\ifdim\Gin@nat@height>\textheight\textheight\else\Gin@nat@height\fi}
\def\fps@figure{htbp}
  \renewcommand*\contentsname{Table of contents}
  \newcommand\contentsname{Table of contents}
  \renewcommand*\listfigurename{List of Figures}
  \newcommand\listfigurename{List of Figures}
  \renewcommand*\listtablename{List of Tables}
  \newcommand\listtablename{List of Tables}
  \renewcommand*\figurename{Figure}
  \newcommand\figurename{Figure}
  \renewcommand*\tablename{Table}
  \newcommand\tablename{Table}
\newcommand{\anon}{1}
\definecolor{vz}{rgb}{0.0, 0.0, 0.7}
\def\cW{\mathcal{W}}
\def\cA{\mathcal{A}}
\def\cG{\mathcal{G}}
\def\cQ{\mathcal{Q}}
\def\cH{\mathcal{H}}
\def\cM{\mathcal{M}}
\def\cF{\mathcal{F}}
\def\cY{\mathcal{Y}}
\def\cO{\mathcal{O}}
\def\cbO{\mathcal{\bar{O}}}
\def\ind{\perp\!\!\!\perp}
\def\given{\mathrel{|}}
\def\R{\mathbb{R}}
\def\bO{\bar{O}}
\def\bo{\bar{o}}
\def\bg{\mathbf{g}}
\def\eic{D}
\def\gbar{\bar{g}}
\def\Qinf{Q_{\infty}}
\def\bQ{\bar{Q}}
\def\bQinf{\bar{Q}_{\infty}}
\def\cbQ{\bar{\mathcal{Q}}}
\def\hpi{\hat{\pi}}
\def\hpibar{\overline{\hat{\pi}}}
\def\hpsi{\hat{\psi}}
\def\tsigma{\tilde{\sigma}}
\def\gbarinf{\bar{g}_{\infty}}
\def\pto{\overset{p}{\to}}
\def\dto{\overset{d}{\to}}
\def\gbarTMLE{\text{ADL-TMLE}\xspace}
\def\giTMLE{\text{AD-TMLE}\xspace}
\newcommand{\norm}[1]{\|#1\|}
\newcommand{\supcovering}[1]{N_{\infty}({#1},\bar{\mathcal{Q}})}
\newcommand{\supcoveringTwo}[2]{\mathcal{N}_{\infty}\left({#1},{#2}\right)}
\newcommand{\supintegralTwo}[2]{\mathcal{J}_{\infty}\left({#1},{#2}\right)}
\newcommand{\logit}{\mathop{\mathrm{logit}}}
\def\s{s}
\newtheorem{assumption}{Assumption}
\newtheorem{definition}{Definition}
\newtheorem{theorem}{Theorem}
\newtheoremstyle{myremarkstyle}
  {\topsep}   
  {\topsep}   
  {}          
  {}          
  {\bfseries} 
  {.}         
  { }         
  {}          
\theoremstyle{myremarkstyle}
\newtheorem{remark}{Remark}
\begin{document}

\def\spacingset#1{\renewcommand{\baselinestretch}%
{#1}\small\normalsize} \spacingset{1}


\if1\anon
{
  \title{\bf Efficient Statistical Estimation for Sequential Adaptive Experiments with Implications for Adaptive Designs}
  \author{Wenxin Zhang, Mark van der Laan\\
    Division of Biostatistics, University of California, Berkeley}
  \maketitle
} \fi

\if0\anon
{
  \bigskip
  \bigskip
  \bigskip
  \begin{center}
    {\LARGE\bf Efficient Statistical Estimation and Optimal Design for Non-i.i.d. and Possibly Adaptive Experiments}
\end{center}
  \medskip
} \fi

\bigskip
\begin{abstract}
Adaptive experimental designs have gained popularity in clinical trials and online experiments. Unlike traditional, fixed experimental designs, adaptive designs can dynamically adjust treatment randomization probabilities and other design features in response to data accumulated sequentially during the experiment. These adaptations are useful to achieve diverse objectives, including reducing uncertainty in the estimation of causal estimands or increasing participants' chances of receiving better treatments during the experiment.
At the end of the experiment, it is often desirable to answer causal questions from the observed data.
However, the adaptive nature of such experiments and the resulting dependence among observations pose significant challenges to providing valid statistical inference and efficient estimation of causal estimands. Building upon the Targeted Maximum Likelihood Estimation (TMLE) framework tailored for adaptive designs \citep{van2008construction}, we introduce a new Adaptive-Design-Likelihood-based TMLE (ADL-TMLE) to estimate a wide class of causal estimands from adaptive experiment data, including the average treatment effect as our primary example. 
We establish asymptotic normality and semiparametric efficiency of \gbarTMLE under relaxed positivity and design stabilization assumptions for adaptive experiments. Motivated by these results, we further propose a novel adaptive design aimed at minimizing the variance of the estimator based on data generated under that design.
Simulations show that \gbarTMLE demonstrates superior variance-reduction performance across different types of adaptive experiments, and that the proposed adaptive design attains lower variance than the standard efficiency-oriented adaptive design. Finally, we generalize our framework to broader settings, including those with longitudinal structures.
\end{abstract}

\noindent%
\vfill

\newpage
\spacingset{1.6} 


\section{Introduction}
Adaptive experimental designs are useful for clinical trials due to their flexibility and statistical efficiency \citep{chow2008adaptive}. Unlike traditional, fixed trial designs, adaptive designs allow investigators to sequentially modify treatment randomization probabilities and/or other key design features based on accumulated data during the experiment. These adaptations can be targeted for a range of objectives: such as improving estimation efficiency for causal estimands and increasing participants’ chances of receiving superior treatments.

Response-Adaptive Randomization (RAR), originating from the foundational work of \cite{thompson1933likelihood}, is an important adaptive design procedure in which treatment assignment probabilities are updated based on prior participants' treatments and outcomes \citep{hu2006theory,robertson2023response}. Covariate-Adjusted Response-Adaptive (CARA) designs extend RAR designs by incorporating baseline covariates into treatment assignment probabilities, enabling personalized treatment allocation for improved participant welfare or statistical efficiency \citep{rosenberger2001covariate,hu2006theory,zhang2007asymptotic,van2008construction,rosenberger2008handling}. The adaptive design approaches are also closely related to the literature on multi-armed bandits and contextual bandits, with broad applications in real-life decision making \citep{bubeck2012regret,lattimore2020bandit,bouneffouf2020survey}.

While adaptive experiments have advantage of improving estimation efficiency and/or bringing benefits to participants compared with those with fixed, non-adaptive designs, they also present challenges for statistical inference. At the end of an adaptive experiment, investigators often seek to estimate causal estimands based on the observed data, such as the average treatment effect or mean outcome under personalized treatment rules. However, the adaptivity of such experiments induces dependence among observations, violating the i.i.d. assumptions that standard statistical inference methods rely on. This has motivated substantial methodological work to provide statistical inference for data collected under adaptive designs. 
For example, \citet{zhang2007asymptotic} and \citet{zhu2015covariate} derived asymptotic results for inference under CARA designs within parametric modeling frameworks. \citet{van2008construction} introduced Targeted Maximum Likelihood Estimation (TMLE) framework which is tailored for adaptive design data and provides non-parametric statistical inference method without relying on parametric assumptions. Subsequent developments include the development of statistical inference for causal estimands such as the risk difference and log-relative risk \citep{chambaz2014inference} as well as for estimating the mean outcome under optimal treatment rules \citep{murphy2003optimal,luedtke2016super} in CARA experiments \citep{chambaz2017targeted}. 
Another line of work employs variants of stabilized reweighting techniques \citep{luedtke2016statistical} and develops asymptotically normal estimators for causal estimands in non‑contextual \citep{hadad2021confidence} and contextual bandit settings \citep{bibaut2021post,zhan2021off}. 

In contrast to these methods, we develop a unified framework that enables semiparametrically efficient estimation and inference for adaptive experiments and provides a new adaptive design that targets estimation efficiency.

First, we introduce a novel Adaptive-Design-Likelihood-based TMLE (\gbarTMLE) framework that enables statistical estimation and inference for a wide class of causal estimands from adaptive experiments, including the average treatment effect as our primary example.
This estimator is driven by the likelihood structure of adaptive experiment and solves the score equation for components that exclusively define the causal estimand of interest.
We show that \gbarTMLE is doubly robust, asymptotically normal and semiparametrically efficient.

A key insight from these results is that the minimal asymptotic variance of regular estimators under the adaptive experiment is inherently linked to that under a reference i.i.d. experiment in which treatments are assigned by the average of treatment randomization functions across the adaptive experiment (which we refer to as the ``average design'').
This highlights the advantage of \gbarTMLE in enabling statistical inference under relaxed assumptions for adaptive experiments.
Specifically, it suffices to require positivity only for the average design, rather than for every individual treatment assignment probability.
This enables greater flexibility in adaptive experiments, where local positivity violations are allowed during some segments of the experiment.
The design stabilization conditions for establishing asymptotic normality and efficiency are also relaxed, as they are required only for the average design, which also leads to improved finite-sample performance of \gbarTMLE.
Moreover, in scenarios where individual-level treatment randomization mechanisms are inaccessible due to operational or privacy constraints, our estimation framework remains feasible as long as the average design is accessible.
These theoretical and practical advantages are further supported by simulation studies, which demonstrate the superior variance-reduction performance of ADL-TMLE across different adaptive experiments with varied objectives.

Second, while our framework provides a semiparametrically efficient estimator for a given adaptive experiment, even greater efficiency can be attained by guiding the real-time adaptive design toward an oracle design that is itself optimized for estimation efficiency.
Motivated by \gbarTMLE framework, we propose a novel adaptive design framework that dynamically steers the average design toward the evolving estimate of the oracle design as data accumulate throughout the experiment.
This adaptive procedure leads to a lower variance than the standard efficiency-oriented adaptive design as shown by simulations.

Finally, our framework is directly applicable to settings with independent data structures, such as cases where certain factors induce heterogeneity in the treatment assignment mechanism but do not affect the primary outcome conditional on treatment and other baseline covariates. Furthermore, we generalize our framework to broader settings, including those with longitudinal data structures.

The remainder of the paper is organized as follows. Section \ref{section:setup} introduces the statistical setup. Section \ref{section:estimation} provides the estimation procedure of \gbarTMLE. Sections \ref{section:asy_normality} and \ref{section:efficiency} establish asymptotic normality and semiparametric efficiency. Section \ref{section:adaptive_design} describes the proposed adaptive design approach. Section \ref{section:simulation} reports simulation studies evaluating the performance of the proposed estimation and design methods. Section \ref{section:generalization} discusses applications and generalizations of our framework. Finally, Section \ref{section:conclusion} concludes.

\section{Data, Model and Statistical Setup}
\label{section:setup}

We define the full data as $X = (W,Y(1),Y(0)) \sim P^F \in \cM^F$, where $W \in \cW$ are baseline covariates, $Y(a) \in \cY$ is the counterfactual outcome under treatment level $a \in \cA$. Without loss of generality, we consider binary treatment space $\cA = \{0,1\}$ and set $\cY = [0,1]$.
Here $P^F$ is the true distribution that generates full data $X$, living in a general model $\cM^F$ without model prespecification. We use $E_{P^F}$ to denote the expectation under $P^F$.
While our framework is generally applicable for estimating a wide class of causal estimands (see Section \ref{section:generalization}), here we use ATE $\Psi(P^F) := E_{P^F}[Y(1)-Y(0)]$ as the main example for illustration.

Consider an adaptive experiment that enrolls $n$ units sequentially, where $i$ is an index of each unit reflecting their enrollment order and $O_i := (W_i,A_i,Y_i)$ are the data to be observed for this unit. 
Upon the enrollment, the experimenter first observes baseline covariates $W_i$, then assigns treatment $A_i$ following a treatment randomization function $g_i:\cW \times \cA \to [0,1]$ within a function class $\cG$. Then, the unit's outcome $Y_i$ is observed. 
The adaptivity of the experiment is reflected in that treatment $A_i$ is determined by a function $g_i$ that is itself generated based on the observed data of previously enrolled units before $i$, which we denote by $\bO(i-1) := (O_1,\cdots,O_{i-1})$.
We use $\bg_{1:n} := (g_1,\cdots, g_i, \cdots, g_n)$ to refer to an adaptive design that includes a collection of treatment randomization functions $g_i \in \cG$ from $i = 1,\cdots,n$, which are known and controlled by the experimenter.

Define $\cQ := \cQ_W \otimes \cQ_Y$, where $\cQ_W$ is the set of possible distributions of $W$ and $\cQ_Y$ is the set of possible conditional distributions of $Y$ given $A$ and $W$. 
For any $Q = (Q_W, Q_Y) \in \cQ$, we denote $\bQ(A,W)$ as a shorthand of $E[Y|A,W]$, the conditional mean of $Y$ given $A$ and $W$.
For every $i \in [n]$, we assume that $W_i \sim Q_W$ and $Y_i \sim Q_Y(\cdot |A_i,W_i )$. 
We use $P_Q^n \in \cM^n$ to denote the distribution that generates the adaptive experiment data $\bO(n)$ in the adaptive design model $\cM^n$. 
The model specifies the likelihood of $\bo(n)=(o_1,\cdots,o_i,\cdots,o_n)$ as 
\[
p^n_{q,g_1,\cdots,g_n}\left(\bo(n)\right)
=  \prod_{i=1}^{n}q_{w}(w_i) 
\times
\prod_{i=1}^{n} g_{i}(a_i|w_i)
\times \prod_{i=1}^{n}q_{y}(y_i|a_i,w_i),
\]
where we denote $q_{w}$ and $q_{y}$ as the probability density (or mass) functions corresponding to the distributions of $Q_W$ and $Q_Y$, and use $g_i(a|w)$ is a shorthand of $g_i(A=a|W=w)$.

One can identify the causal estimand $\Psi(P^F)$ with the following assumptions:
\begin{assumption}[Sequential Exchangeability]\label{assumption:sequential_randomization} 
$\forall i \in [n]$, $Y_{i}(a) \ind A_i \given 
 W_i, \bO(i-1)$.
\end{assumption}

\begin{assumption}[Theoretical Positivity]\label{assumption:identification_positivity} 
$
\forall w \in \mathcal{W}, \, a \in \mathcal{A}, \, \exists i \in [n] \text{ such that } g_{i}(a|w) > 0.$
\end{assumption}

\begin{theorem}[Identification of ATE]
With Assumptions \ref{assumption:sequential_randomization} and \ref{assumption:identification_positivity}, $\Psi(P^F)$ is identified by
\begin{eqnarray*}
    \Psi(P^F) 
    &=& E[\bQ(1,W)-\bQ(0,W)].
\end{eqnarray*}
\end{theorem}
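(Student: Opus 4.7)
By linearity of expectation, it suffices to show $E_{P^F}[Y(a)] = E[\bar{Q}(a,W)]$ for each $a \in \{0,1\}$ and then difference the two arms. Since the full-data variables $X_i = (W_i, Y_i(1), Y_i(0))$ are iid from $P^F$, this reduces further to verifying $E_{P^F}[Y(a) \mid W = w] = \bar{Q}(a,w)$ for $Q_W$-almost every $w$. My strategy is to first pin down this identity at a single unit $i$ along a history on which the treatment randomization function puts positive mass on $a$, and then use the iid full-data structure to lift the conclusion to the unconditional object.

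For the per-unit step, fix any $i$ and any history $\bar{o}(i-1)$ with $g_i(a\mid w)>0$. The likelihood factorization in $\cM^n$ given in the setup guarantees that $Y_i$, conditional on $(W_i=w, A_i=a, \bar{O}(i-1)=\bar{o}(i-1))$, is a draw from $Q_Y(\cdot\mid a,w)$, so its conditional mean is exactly $\bar{Q}(a,w)$. Under the (implicit) consistency assumption $Y_i = Y_i(A_i)$, this rewrites as $E[Y_i(a)\mid W_i=w, A_i=a, \bar{O}(i-1)=\bar{o}(i-1)]=\bar{Q}(a,w)$. Sequential exchangeability (Assumption~\ref{assumption:sequential_randomization}) then strips away the conditioning on $A_i$, yielding $E[Y_i(a)\mid W_i=w, \bar{O}(i-1)=\bar{o}(i-1)]=\bar{Q}(a,w)$.

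To conclude, I would use that because $X_i$ are iid draws from $P^F$, the counterfactual $Y_i(a)$ conditional on $W_i=w$ is independent of the earlier observations $\bar{O}(i-1)$ and has the same law as $Y(a)\mid W=w$ under $P^F$. Averaging the per-unit identity over any set of histories of positive probability on which $g_i(a\mid w)>0$ then gives $E_{P^F}[Y(a)\mid W=w]=\bar{Q}(a,w)$, and positivity (Assumption~\ref{assumption:identification_positivity}) guarantees that for every $(w,a)$ at least one such $(i,\bar{o}(i-1))$ exists. Marginalizing over $W\sim Q_W$ and subtracting the two arms produces the stated formula. The main care point, and the one I expect to need the most attention, is interpreting positivity correctly: since $g_i$ is itself a function of $\bar{O}(i-1)$, Assumption~\ref{assumption:identification_positivity} should be read as asserting the existence of some $i$ together with a set of histories of positive probability under $P_Q^n$ on which $g_i(a\mid w)>0$. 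Once that reading is in place, the remainder is a standard g-formula manipulation adapted to the sequential likelihood.
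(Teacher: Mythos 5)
The paper states this identification result without proof (the appendix only contains proofs for the theorems of Sections \ref{section:asy_normality} and \ref{section:efficiency}), so there is no in-paper argument to compare against; your proposal supplies the standard g-computation derivation adapted to the sequential likelihood, and it is correct. The chain of steps is sound: the factorized likelihood gives $E[Y_i \mid W_i=w, A_i=a, \bar O(i-1)=\bar o(i-1)]=\bar Q(a,w)$ on any history where $g_i(a\mid w)>0$; consistency converts $Y_i$ to $Y_i(a)$; Assumption~\ref{assumption:sequential_randomization} removes the conditioning on $A_i$; and the i.i.d.\ full-data structure makes the resulting conditional expectation free of $\bar O(i-1)$ and equal to $E_{P^F}[Y(a)\mid W=w]$. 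You are also right to flag the two points the paper leaves implicit: consistency ($Y_i=Y_i(A_i)$) is used but never stated as an assumption, and Assumption~\ref{assumption:identification_positivity} must be read as requiring, for each $(a,w)$, some index $i$ and a positive-probability set of histories on which $g_i(a\mid w)>0$, since $g_i$ is itself a function of $\bar O(i-1)$. No gaps.
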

We denote this target estimand by $\psi(Q):= E[\bQ(1,W)-\bQ(0,W)]$.

\section{Targeted Maximum Likelihood Estimation}
\label{section:estimation}
At the end of an adaptive experiment, one observes a single series of dependent data $\bO(n) = (O_1,\cdots,O_n)$. 
Let $\bQ_n$ denote an initial estimate of the true conditional mean function $\bQ_0$, estimated by a user-specified model or an ensemble learner such as Super Learner \citep{van2007super}.
For any adaptive design $\bg_{1:n}$, we define its average design $\gbar_n := \frac{1}{n}\sum_{i=1}^n g_i$.

Our proposed \gbarTMLE is
\[
\psi(Q_n^*) := \frac{1}{n}\sum_{i=1}^{n} \left[\bQ_n^*(1,W_i) - \bQ_n^*(0,W_i)\right],
\]
where $Q_n^*=(Q_{n,W}, Q_{n,Y}^*)$ with $Q_{n,W}$ denotes the empirical distribution of baseline covariates, and $Q_{n,Y}^*$ an updated outcome distribution associated with a TMLE update $\bQ_n^*$ of the initial estimate $\bQ_n$. Specifically, $\bQ_n^*$ is obtained by solving
\[
\frac{1}{n} \sum_{i=1}^{n} \frac{2A_i-1}{\gbar_n(A_i|W_i)}\left[Y_i-\bQ_n^*(A_i,W_i)\right] = 0,
\]
via the following regression:
\[
\logit \bQ_{n,\epsilon} \left(A, W \right) = \logit \bQ_{n} \left(A,W \right) + \epsilon H_n(A,W),
\]
where $H_n$ is defined by $H_n(A,W):=\frac{2A-1}{\gbar_n(A|W)}$ based on the average design $\gbar_n$. The TMLE update is $\bQ_n^*=\bQ_{n,\epsilon_n}$, where $\epsilon_n$ is the solution of that regression.

Define
$
\sigma_n^2 := \frac{1}{n}\sum_{i=1}^n\left\{ \frac{2A_i-1}{\gbar_n(A_i|W_i)}\left[Y_i-\bQ_n^*(A_i,W_i)\right] + \bQ_n^*(1,W_i) - \bQ_n^*(0,W_i) - \psi(Q_n^*)\right\}^2$.
The variance estimator of $\psi(Q_n^*)$ is given by $\sigma_n^2/n$, 
The corresponding $100(1 - \alpha)\%$ confidence interval is $[\psi(Q_n^*)-z_{1-\alpha/2}\sigma_n/\sqrt{n}, \psi(Q_n^*)+z_{1-\alpha/2}\sigma_n/\sqrt{n}]$,
where $z_{1-\alpha/2}$ is the $1-\alpha/2$ quantile of standard normal distribution (e.g. $\alpha = 0.05$).

We note that our proposed \gbarTMLE is driven by the likelihood structure of adaptive experiment and solves the score equation for the $Q$-components that exclusively define the target parameter $\psi(Q)$.
For any $Q = (Q_W, Q_Y) \in \cQ$ and treatment randomization function $g \in \cG$, we denote $D(Q,g)$ as the efficient influence curve (EIC) of the target estimand in an i.i.d. experiment where $W \sim Q_W, A \sim g(\cdot | W)$, and $Y\sim Q_Y(\cdot| A,W )$. For ATE, the EIC is 
$
\eic(Q,g)(O) = \eic_1(Q,g)(O)+\eic_2(Q)(O),
$ where $\eic_1(Q,g)(O):=\frac{2A-1}{g(A|W)}\left(Y-\bQ(A,W)\right)$ and $\eic_2(Q)(O):= \bQ(1,W) - \bQ(0,W) - \psi(Q)$.
The core procedure of \gbarTMLE is to solve $\frac{1}{n}\sum_{i=1}^n D(Q,\gbar_n)(O_i)=0$, which is equivalent to solving the score equation as if in the i.i.d. experiment where the treatment distribution follows $A \sim \gbar_n(\cdot | W)$, even though the data is collected from the adaptive experiment. 

For comparison, we also introduce the original TMLE tailored for adaptive design data proposed by \cite{van2008construction}, which we denote by $\psi(Q_n^{\dagger})$ and refer to as ``\giTMLE''  (adaptive-design TMLE). In \giTMLE, $Q_n^{\dagger}=(Q_{n,W},Q_{n,Y}^{\dagger})$, and $\bQ_n^{\dagger}$ is the TMLE update of the estimated conditional mean function $\bQ_n$, which solves 
$
\frac{1}{n} \sum_{i=1}^{n} \frac{2A_i - 1}{g_i(A_i|W_i)} [Y_i - \bQ_n^{\dagger}(A_i,W_i)] = 0$ using every $i$-specific treatment randomization function $g_i$ from the adaptive design $\bg_{1:n}$. 
Its variance estimator is given by $\tsigma_n^2/n$, where
$
\tsigma_n^2 := \frac{1}{n}\sum_{i=1}^n\left\{ \frac{2A_i-1}{g_i(A_i|W_i)}\left[Y_i-\bQ_n^{\dagger}(A_i,W_i)\right] + \bQ_n^{\dagger}(1,W_i) - \bQ_n^{\dagger}(0,W_i) - \psi(Q_n^{\dagger})\right\}^2$.
While next sections focus on properties of TMLEs, they also apply to the corresponding One-Step/AIPW estimators
$
\hpsi_n(\bQ_n) := \frac{1}{n} \sum_{i=1}^{n} \frac{2A_i-1}{\gbar_n(A_i|W_i)}\left[Y_i-\bQ_n(A_i,W_i)\right] + \frac{1}{n}\sum_{i=1}^{n} \left[\bQ_n(1,W_i) - \bQ_n(0,W_i)\right],
$
and 
$
\hpsi_n^{\dagger}(\bQ_n):=\frac{1}{n} \sum_{i=1}^{n} \frac{2A_i-1}{g_i(A_i|W_i)}\left[Y_i-\bQ_n(A_i,W_i)\right] + \frac{1}{n}\sum_{i=1}^{n} \left[\bQ_n(1,W_i) - \bQ_n(0,W_i)\right]$.

\section{Asymptotic Normality}
\label{section:asy_normality}

In this section we analyze the asymptotic normality of \gbarTMLE.
We define $P_{Q,g}f:=\int f(o) q(o)g(o)d\mu(o)$ for any $Q$ and $g$, and $P_{Q,g_i}f:=\int f(o) q(o)g_i(o)d\mu(o)$ for every $g_i$ in adaptive design, where $\mu$ is a measure on $\cO$ equipped with the product of the $\sigma$-fields of $\cW$, $\cA$ and $\cY$. Note that $P_{Q,\gbar_n}f=\frac{1}{n}\sum_{i=1}^nP_{Q,g_i}f$.

To establish the asymptotic normality of $\psi(Q_n^*)$, we first decompose the difference between $\psi(Q_n^*)$ and $\psi(Q_0)$ as follows.

\begin{theorem}[Difference between $\psi(Q_n^*)$ and $\psi(Q_0)$]
\label{theorem:tmle_decomposition}
The difference of $\psi(Q_n^*)$ and the true parameter $\psi(Q_0)$ can be decomposed as
\begin{eqnarray*}
&& \psi(Q_n^*) - \psi(Q_0)
= M_{1,n}(\Qinf,\gbarinf) + M_{2,n}(Q_n^*,\Qinf,\gbar_n) + M_{3,n} (\Qinf,\gbar_n,\gbarinf), \label{eq:TMLE_decomposition} 
\end{eqnarray*}
where
\begin{eqnarray*}
M_{1,n}(\Qinf,\gbarinf) 
&:=& \frac{1}{n} \sum_{i=1}^n \left[\eic(\Qinf,\gbarinf)(O_i)
- P_{Q_0,g_i} \eic(\Qinf,\gbarinf)\right], \\
M_{2,n}(Q_n^*,\Qinf,\gbar_n) &:=&\frac{1}{n} \sum_{i=1}^n  \Big\{ \left[\eic(Q_n^*,\gbar_n)(O_i)
- P_{Q_0,g_i} \eic(Q_n^*,\gbar_n)\right]\\ 
&& - \left[\eic(\Qinf,\gbar_n)(O_i)
- P_{Q_0,g_i} \eic(\Qinf,\gbar_n) \right] \Big\}, \\
M_{3,n}(\Qinf,\gbar_n,\gbarinf) &:=& \frac{1}{n} \sum_{i=1}^n \bigg\{\left[\eic(\Qinf,\gbar_n)(O_i)-P_{Q_0,g_i} \eic(\Qinf,\gbar_n)\right] \\
&& - \left[\eic(\Qinf,\gbarinf)(O_i) -P_{Q_0,g_i} \eic(\Qinf,\gbarinf)\right] \bigg\},
\end{eqnarray*}
with any fixed $\gbarinf \in \cG$ and $\Qinf:=(Q_{\infty,W}, Q_{\infty,Y}) \in \cQ$, where $Q_{\infty,W}=Q_{0,W} \in \cQ_W$ and  $Q_{\infty,Y} \in \cQ_Y$ with its conditional mean function of the outcome $\bQinf \in \cbQ$.
\end{theorem}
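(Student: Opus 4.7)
The plan is to reduce everything to the TMLE score equation together with a direct computation of $\frac{1}{n}\sum_i P_{Q_0,g_i} D(Q_n^*,\gbar_n)$, and then obtain the three $M$-terms by a simple telescoping identity.

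First, observe that by construction of the TMLE, the estimator satisfies $\frac{1}{n}\sum_{i=1}^n D(Q_n^*,\gbar_n)(O_i)=0$. Indeed, the regression step forces $\frac{1}{n}\sum_i \frac{2A_i-1}{\gbar_n(A_i|W_i)}[Y_i-\bQ_n^*(A_i,W_i)]=0$, and because $Q_{n,W}$ is the empirical distribution of $W$, the remaining part $\frac{1}{n}\sum_i\{\bQ_n^*(1,W_i)-\bQ_n^*(0,W_i)-\psi(Q_n^*)\}$ is identically zero by the definition of $\psi(Q_n^*)$. So the score equation holds exactly.

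Second, I would compute $\frac{1}{n}\sum_{i=1}^n P_{Q_0,g_i} D(Q_n^*,\gbar_n)$. For each $i$, evaluating the expectation under $W\sim Q_{0,W}$, $A\mid W\sim g_i$, $Y\mid A,W\sim Q_{0,Y}$ yields
\[
P_{Q_0,g_i} D(Q_n^*,\gbar_n)
= E_{Q_{0,W}}\!\left[\sum_a \frac{g_i(a|W)}{\gbar_n(a|W)}(2a-1)(\bQ_0(a,W)-\bQ_n^*(a,W))\right]
+E_{Q_{0,W}}[\bQ_n^*(1,W)-\bQ_n^*(0,W)]-\psi(Q_n^*).
\]
The key identity $\frac{1}{n}\sum_{i=1}^n g_i(a|w)=\gbar_n(a|w)$ then collapses the $g_i/\gbar_n$ ratio in the average, producing
\[
\frac{1}{n}\sum_{i=1}^n P_{Q_0,g_i} D(Q_n^*,\gbar_n)=E_{Q_{0,W}}[\bQ_0(1,W)-\bQ_0(0,W)]-\psi(Q_n^*)=\psi(Q_0)-\psi(Q_n^*).
\]
Combining this with the score equation of the previous step gives the ``one-shot'' identity
\[
\psi(Q_n^*)-\psi(Q_0)=\frac{1}{n}\sum_{i=1}^n\bigl[D(Q_n^*,\gbar_n)(O_i)-P_{Q_0,g_i}D(Q_n^*,\gbar_n)\bigr]=: S_n(Q_n^*,\gbar_n).
\]

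Third, I would obtain the stated decomposition by the trivial telescoping
\[
S_n(Q_n^*,\gbar_n)=S_n(\Qinf,\gbarinf)+\bigl[S_n(\Qinf,\gbar_n)-S_n(\Qinf,\gbarinf)\bigr]+\bigl[S_n(Q_n^*,\gbar_n)-S_n(\Qinf,\gbar_n)\bigr],
\]
and identify these three summands with $M_{1,n}(\Qinf,\gbarinf)$, $M_{3,n}(\Qinf,\gbar_n,\gbarinf)$ and $M_{2,n}(Q_n^*,\Qinf,\gbar_n)$, respectively; each identification is immediate from the definitions.

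The only substantive step is the computation in the second paragraph; the rest is algebraic bookkeeping. The main obstacle, if there is one, is making clear why the $\frac{1}{n}\sum_i g_i=\gbar_n$ cancellation is legitimate here but fails for the original \giTMLE (where the denominator is $g_i$, not $\gbar_n$), since it is precisely this cancellation that drives the identification of the average design $\gbar_n$ as the effective randomization mechanism and motivates the whole \gbarTMLE construction.
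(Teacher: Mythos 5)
Your proposal is correct and follows essentially the same route as the paper: the paper phrases your second step as a von Mises expansion whose remainder $\frac{1}{n}\sum_i\int\sum_a(2a-1)\frac{g_i(a|w)-\gbar_n(a|w)}{\gbar_n(a|w)}(\bQ_n^*(a,w)-\bQ_0(a,w))\,dQ_{0,W}(w)$ vanishes by the identity $\frac{1}{n}\sum_i g_i=\gbar_n$, which is exactly the cancellation you compute directly, and the score equation plus add-and-subtract telescoping are identical. No gaps.
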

The proof of this decomposition is given in Appendix \ref{app:proofs}. 

Suppose there exist fixed $\Qinf$ and $\gbarinf$ such that $\bQ_n^*$ and $\gbar_n$ converge to them, respectively, under the assumptions stated below.
We then establish the asymptotic normality of $M_{1,n}$, which is the average of a martingale difference sequence indexed by $\Qinf$ and $\gbarinf$.

\begin{assumption}[Boundedness of the Average of Adaptive Design]
\label{assumption:strong_positivity_of_average_design}
    There exists $\zeta > 0$ such that $\inf_{a \in \cA, \, w \in \cW}\gbar_n(a|w) \ge \zeta$. 
\end{assumption}

\begin{assumption}[Stabilized Variance under the Average of Adaptive Design] 
\label{assumption:stabilized_variance}
    There exists $\sigma^2_{0} \in \R^+$ such that 
    $
    P_{Q_0, \gbar_n} \eic(\Qinf,\gbarinf)^2 \pto \sigma^2_{0}.$
\end{assumption}

We also provide an alternative to Assumption \ref{assumption:stabilized_variance} that characterizes how the average design $\gbar_n$ converges to the fixed design $\gbarinf$.

\begin{assumption}
\label{assumption:stabilized_variance_gbarinf}
    $P_{Q_0, \gbar_n} \eic(\Qinf,\gbarinf)^2 \pto P_{Q_0, \gbarinf} \eic(\Qinf,\gbarinf)^2$.
\end{assumption}

\begin{theorem}
[Asymptotic Normality of $M_{1,n}$] 
\label{theorem:asynormal_first}
    Suppose assumptions \ref{assumption:strong_positivity_of_average_design} and \ref{assumption:stabilized_variance} (or \ref{assumption:stabilized_variance_gbarinf}) hold, then
    \[
    \sqrt{n}M_{1,n}(\Qinf,\gbarinf) \dto N(0,\sigma_0^2).
    \]
\end{theorem}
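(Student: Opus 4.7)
The plan is to exploit the martingale structure of $M_{1,n}(\Qinf,\gbarinf)$ and invoke a martingale central limit theorem (e.g., Hall and Heyde's Theorem 3.2, or Brown 1971).

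Let $\cF_i := \sigma(\bO(i))$. Since $g_i$ is generated from $\bO(i-1)$ it is $\cF_{i-1}$-measurable, and $O_i \mid \cF_{i-1} \sim P_{Q_0,g_i}$. Because $\eic(\Qinf,\gbarinf)$ is a \emph{fixed} (non-random) function of $o$, the rescaled increments
\[
\xi_{n,i} := \frac{1}{\sqrt{n}} \bigl[\eic(\Qinf,\gbarinf)(O_i) - P_{Q_0,g_i}\eic(\Qinf,\gbarinf)\bigr]
\]
form a martingale difference array with respect to $\{\cF_i\}$, and $\sqrt{n}\, M_{1,n}(\Qinf,\gbarinf) = \sum_{i=1}^n \xi_{n,i}$. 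It therefore suffices to verify the two standard hypotheses of the martingale CLT.

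For the conditional Lindeberg condition, Assumption~\ref{assumption:strong_positivity_of_average_design} ensures $\gbar_n \ge \zeta$ uniformly in $n$, and the same bound is inherited by the limit $\gbarinf$. Together with $Y \in [0,1]$ and $\bQinf \in [0,1]$, this yields a deterministic uniform bound $|\eic(\Qinf,\gbarinf)| \le C$ for a constant $C$ depending only on $\zeta$, whence $|\xi_{n,i}| \le 2C/\sqrt{n}$. For any fixed $\epsilon > 0$, the indicator $\mathbf{1}\{|\xi_{n,i}| > \epsilon\}$ is identically zero once $n$ is large enough, so the conditional Lindeberg sum vanishes trivially.

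For the sum of conditional variances, the identity $P_{Q_0,\gbar_n} f = n^{-1} \sum_i P_{Q_0,g_i} f$ gives
\[
\sum_{i=1}^n E[\xi_{n,i}^2 \mid \cF_{i-1}] = P_{Q_0,\gbar_n} \eic(\Qinf,\gbarinf)^2 - \frac{1}{n} \sum_{i=1}^n \bigl(P_{Q_0,g_i}\eic(\Qinf,\gbarinf)\bigr)^2.
\]
The first term converges in probability to $\sigma_0^2$ by Assumption~\ref{assumption:stabilized_variance} (or by Assumption~\ref{assumption:stabilized_variance_gbarinf} combined with the uniform boundedness of $\eic^2$). The main obstacle is controlling the subtracted term: under stabilization of the average design one expects each $P_{Q_0,g_i}\eic(\Qinf,\gbarinf)$ to concentrate near a common limit (for ATE a direct computation gives $P_{Q_0,\gbarinf}\eic(\Qinf,\gbarinf) = \psi(Q_0) - \psi(\Qinf)$), so that the subtracted term is either second-order or is absorbed into the definition of $\sigma_0^2$; in particular it vanishes whenever $\psi(\Qinf) = \psi(Q_0)$, as happens when $\bQinf = \bQ_0$. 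With both conditions verified, the martingale CLT yields $\sqrt{n}\, M_{1,n}(\Qinf,\gbarinf) \dto N(0,\sigma_0^2)$.
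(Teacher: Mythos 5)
Your overall route is the same as the paper's: the paper's entire proof of this theorem is the one-line citation to the martingale CLT (Brown 1971, Theorem 2), and you are filling in exactly the verification it leaves implicit. The identification of the martingale difference array and the conditional Lindeberg check are fine (the uniform bound on $\eic(\Qinf,\gbarinf)$ does require $\gbarinf$ to be bounded away from zero, which is not literally among the hypotheses of this theorem but follows from $\gbar_n \ge \zeta$ once $\gbar_n \to \gbarinf$ pointwise, as the surrounding text assumes).

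The gap is in the conditional-variance step, and you have correctly located it but not closed it. Your identity
\[
\sum_{i=1}^n E[\xi_{n,i}^2 \mid \cF_{i-1}] = P_{Q_0,\gbar_n} \eic(\Qinf,\gbarinf)^2 - \frac{1}{n}\sum_{i=1}^n \bigl(P_{Q_0,g_i}\eic(\Qinf,\gbarinf)\bigr)^2
\]
is correct, and Assumption \ref{assumption:stabilized_variance} controls only the first term. The subtracted term is a Ces\`aro average of \emph{squares} of the $i$-specific conditional means; stabilization of the average design controls $\frac{1}{n}\sum_i P_{Q_0,g_i}\eic(\Qinf,\gbarinf)$, not $\frac{1}{n}\sum_i (P_{Q_0,g_i}\eic(\Qinf,\gbarinf))^2$, and by Jensen the latter can exceed $(\psi(Q_0)-\psi(\Qinf))^2$ by an amount that depends on how much the individual $g_i$ fluctuate around $\gbarinf$ --- something the theorem's hypotheses do not constrain at all. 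Asserting that this term is ``second-order or absorbed into the definition of $\sigma_0^2$'' is not a proof: as the paper defines it, $\sigma_0^2$ is the limit of the conditional second moments $P_{Q_0,\gbar_n}\eic(\Qinf,\gbarinf)^2$, so the claimed limiting variance is only obtained if the subtracted term tends to zero. That happens exactly when $P_{Q_0,g_i}\eic_1(\Qinf,\gbarinf)=0$ for all $i$, i.e. when $\bQinf=\bQ_0$ (note $P_{Q_0,g_i}\eic_2(\Qinf)=0$ automatically because $Q_{\infty,W}=Q_{0,W}$). To cover the misspecified case $\bQinf\ne\bQ_0$ --- which the paper explicitly relies on for its double-robustness claim --- you would need either an additional assumption (e.g. $\frac{1}{n}\sum_i (P_{Q_0,g_i}\eic(\Qinf,\gbarinf))^2 \pto c$, with limiting variance $\sigma_0^2-c$) or a redefinition of $\sigma_0^2$ as the limit of the summed conditional variances rather than second moments. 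Since the paper supplies no detail beyond the citation, you have in fact surfaced a point the paper glosses over; but as a proof of the theorem as stated, your argument is incomplete at this step.
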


This asymptotic normality directly follows the martingale central limit theorems (e.g. Theorem 2 in \cite{brown1971martingale}). 

For the remaining martingale process terms $M_{2,n}$ and $M_{3,n}$, we establish their equicontinuity conditions following the equicontinuity results of the general setup in \cite{zhang2024evaluating}.
\begin{assumption}[Covering Number of $\cbQ$]
\label{assumption:reasonable_covering_integral}
$\cbQ$ is a class of functions with finite sup-norm covering number $\supcovering{\epsilon}$, with $\supintegralTwo{\epsilon}{\cbQ}:= \int_0^\epsilon \sqrt{\log \left(1+\supcoveringTwo{u}{\cbQ} \right)} d u$ the $\epsilon$-entropy integral in supremum norm of $\cbQ$.
\end{assumption}

\begin{assumption}[Convergence of $\bQ_n^*$] 
\label{assumption:sigma_N_convergence}
Define
\[
\s_{n}\left(\bQ, \bQinf, \gbar_n \right):=\sqrt{\frac{1}{n} \sum_{i=1}^n E \left[\left. \rho\left(\frac{| \eic_1\left(\bQ, \gbar_n\right)(O_i)-\eic_1\left(\bQinf,\gbar_n\right)(O_i)|}{c_1}\right) \right| \bO(i-1) \right]},
\]
where $\rho(x)=e^x-x-1$ and $c_1 = 8/\zeta$ with $\zeta$ as assumed in Assumption \ref{assumption:strong_positivity_of_average_design}.
Assume that $\s_n(\bQ_{n}^*,\bQinf,\gbar_n) = o_P(1)$.
\end{assumption}

\begin{theorem}[Equicontinuity of $M_{2,n}$]
\label{theorem:equicontinuity}
Suppose Assumptions 
\ref{assumption:strong_positivity_of_average_design},
\ref{assumption:reasonable_covering_integral},
\ref{assumption:sigma_N_convergence}
hold.
Then, $M_{2,n}(Q_n^*,\Qinf,\gbar_n) = o_P(n^{-\frac{1}{2}})$.
\end{theorem}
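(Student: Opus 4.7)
The plan is to recast $\sqrt{n}\,M_{2,n}$ as the fluctuation of a martingale empirical process indexed by the outcome regression $\bQ$, and then invoke a stochastic equicontinuity argument. For any fixed $\bQ$ and any fixed $g$, the standardized sum
$$\Phi_n(\bQ,g) := \frac{1}{\sqrt{n}}\sum_{i=1}^n\left\{\eic(Q,g)(O_i) - P_{Q_0,g_i}\eic(Q,g)\right\}$$
is a sum of martingale differences with respect to $\{\sigma(\bO(i)): i \ge 0\}$, and by definition $\sqrt{n}\,M_{2,n} = \Phi_n(\bQ_n^*,\gbar_n) - \Phi_n(\bQinf,\gbar_n)$. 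The goal is to show this fluctuation is $o_P(1)$ by controlling it uniformly over a shrinking $\s_n$-ball around $\bQinf$ and then combining with Assumption \ref{assumption:sigma_N_convergence}.

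First I would isolate the relevant function class. Because $Q_{\infty,W}=Q_{0,W}$, the $\eic_2$ component contributes to the increment only through the map $w \mapsto [\bQ-\bQinf](1,w)-[\bQ-\bQinf](0,w)$, while the dominant $\eic_1$ term becomes $\frac{2A-1}{\gbar_n(A|W)}[\bQinf-\bQ](A,W)$, whose reweighting factor is uniformly bounded by $1/\zeta$ under Assumption \ref{assumption:strong_positivity_of_average_design}. Hence $\bQ \mapsto \eic(Q,\gbar_n)-\eic(\Qinf,\gbar_n)$ is Lipschitz in sup norm with constant of order $1/\zeta$, so the sup-norm covering number of the induced centered class is controlled by $\supcoveringTwo{\zeta\epsilon}{\cbQ}$, and its entropy integral is finite by Assumption \ref{assumption:reasonable_covering_integral}.

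Next I would appeal to the martingale maximal inequality from \cite{zhang2024evaluating}, which bounds the expected supremum of a martingale process indexed by a bounded class by a Dudley-type integral of the covering number in the Orlicz norm generated by $\rho(x)=e^x-x-1$. That Orlicz norm matches precisely the $\s_n$ appearing in Assumption \ref{assumption:sigma_N_convergence}, and the envelope $c_1=8/\zeta$ aligns with the boundedness budget afforded by Assumption \ref{assumption:strong_positivity_of_average_design}. The inequality yields, for every $\delta>0$,
$$\E\!\left[\sup_{\bQ\in\cbQ:\,\s_n(\bQ,\bQinf,\gbar_n)\le\delta}\bigl|\Phi_n(\bQ,\gbar_n)-\Phi_n(\bQinf,\gbar_n)\bigr|\right] \;\lesssim\; \supintegralTwo{\delta}{\cbQ},$$
which vanishes as $\delta\downarrow 0$. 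Combined with $\s_n(\bQ_n^*,\bQinf,\gbar_n)=o_P(1)$ via a standard localization step, this delivers $\Phi_n(\bQ_n^*,\gbar_n)-\Phi_n(\bQinf,\gbar_n)=o_P(1)$, i.e.\ $M_{2,n}(Q_n^*,\Qinf,\gbar_n)=o_P(n^{-1/2})$.

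The main obstacle will be that $\gbar_n$ is not $\bO(i-1)$-measurable, so $\Phi_n(\bQ,\gbar_n)$ is not literally a martingale for fixed arguments in the natural filtration. I expect to resolve this through the equicontinuity framework of \cite{zhang2024evaluating}, which is engineered for exactly this adaptive setting: $\gbar_n$ enters the class uniformly through the bounded weight $1/\gbar_n$, Assumption \ref{assumption:strong_positivity_of_average_design} makes the enlarged class uniformly bounded, and the sup-norm entropy bound transfers. The remaining items---an Orlicz moment bound at the random index $\bQ_n^*$ via $\s_n$, and the matching of envelope and Lipschitz constants---are routine given Assumptions \ref{assumption:strong_positivity_of_average_design}--\ref{assumption:sigma_N_convergence}.
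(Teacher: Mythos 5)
Your proposal is correct and follows essentially the same route as the paper, which establishes Theorem \ref{theorem:equicontinuity} by invoking the martingale equicontinuity results of \cite{zhang2024evaluating}: a maximal inequality in the Bernstein--Orlicz norm generated by $\rho(x)=e^x-x-1$ (matching $\s_n$ and the constant $c_1=8/\zeta$), a Dudley-type sup-norm entropy integral controlled via Assumption \ref{assumption:reasonable_covering_integral} and the $1/\zeta$-Lipschitz dependence of the increment on $\bQ$, and localization through $\s_n(\bQ_n^*,\bQinf,\gbar_n)=o_P(1)$. The measurability subtlety you flag about $\gbar_n$ is real and is likewise absorbed into that cited framework rather than resolved separately here.
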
 

\begin{assumption}[Covering Number of $\cG$]
\label{assumption:reasonable_covering_integral_gbar}
$\cG$ is a class of functions with finite sup-norm covering number $\supcoveringTwo{\epsilon}{\cG}$, with $\supintegralTwo{\epsilon}{\cG}:= \int_0^\epsilon \sqrt{\log \left(1+\supcoveringTwo{u}{\cG} \right)} d u$ the $\epsilon$-entropy integral in supremum norm of $\cG$.
\end{assumption}

\begin{assumption}[Convergence of Average Design]
\label{assumption:sigma_N_convergence_gbar}
Suppose $\inf_{a \in \cA,w \in \cW}\gbarinf(a|w) \ge \eta$ for some $\eta >0$.
Define
\[
v_n\left(\bQinf, \gbar_n, \gbarinf \right):=\sqrt{\frac{1}{n} \sum_{i=1}^n E \left[\left. \rho\left(\frac{| \eic_1\left(\bQinf, \gbar_n\right)(O_i)-\eic_1\left(\bQinf,\gbarinf\right)(O_i)|}{c_2}\right) \right| \bO(i-1) \right]},
\]
where $\rho(x)=e^x-x-1$ and $c_2 = 8/{\eta}^2$.
Assume that $v_n(\bQinf, \gbar_n, \gbarinf) = o_P(1)$.
\end{assumption}

\begin{theorem}[Equicontinuity of $M_{3,n}$]
\label{theorem:equicontinuity_gbar}
Suppose Assumptions
\ref{assumption:strong_positivity_of_average_design},
\ref{assumption:reasonable_covering_integral_gbar},
\ref{assumption:sigma_N_convergence_gbar}
hold.
Then, $M_{3,n}(\Qinf,\gbar_n, \gbarinf) = o_P(n^{-\frac{1}{2}})$.
\end{theorem}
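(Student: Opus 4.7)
The plan is to mirror the equicontinuity argument for $M_{2,n}$ in Theorem \ref{theorem:equicontinuity}, but with the index now varying in $g \in \cG$ (with $\Qinf$ held fixed) rather than in $\bQ \in \cbQ$. First, observe that $\eic_2(\Qinf)(O) = \bQinf(1,W)-\bQinf(0,W)-\psi(\Qinf)$ does not involve $g$, so its contributions cancel in the difference defining $M_{3,n}$, leaving
\begin{equation*}
M_{3,n}(\Qinf,\gbar_n,\gbarinf) = \frac{1}{n}\sum_{i=1}^n \Delta_i(\gbar_n),
\end{equation*}
where $\Delta_i(g):=\bigl[\eic_1(\Qinf, g)(O_i)-P_{Q_0,g_i}\eic_1(\Qinf, g)\bigr] - \bigl[\eic_1(\Qinf,\gbarinf)(O_i)-P_{Q_0,g_i}\eic_1(\Qinf,\gbarinf)\bigr]$. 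For any deterministic $g \in \cG$, $(\Delta_i(g))_{i=1}^n$ is a martingale difference sequence with respect to $\cF_i := \sigma(\bO(i))$, because $g_i$ is $\cF_{i-1}$-measurable and the centering by $P_{Q_0,g_i}$ makes each conditional mean vanish.

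The next step is a uniform maximal inequality over $g$. Writing
\begin{equation*}
\eic_1(\Qinf, g)(O) - \eic_1(\Qinf,\gbarinf)(O) = (2A-1)\bigl(Y-\bQinf(A,W)\bigr)\left[\frac{1}{g(A|W)}-\frac{1}{\gbarinf(A|W)}\right],
\end{equation*}
the bounds $|Y-\bQinf|\le 1$, $|2A-1|=1$, Assumption \ref{assumption:strong_positivity_of_average_design} ($\gbar_n\ge\zeta$), and the hypothesis $\gbarinf \ge \eta$ in Assumption \ref{assumption:sigma_N_convergence_gbar} show that the increments are uniformly bounded and satisfy Bernstein-type conditional tails; the constant $c_2 = 8/\eta^2$ is precisely the Bernstein scale after applying the mean-value bound $|1/g - 1/\gbarinf| \le 2|g-\gbarinf|/\eta^2$ for $g$ in a sup-norm neighborhood of $\gbarinf$. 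Invoking the dependent-data chaining framework of \cite{zhang2024evaluating} over a sup-norm $\epsilon$-covering of $\cG$ with the conditional-variance proxy $v_n$, one obtains a bound of the form
\begin{equation*}
\sup_{g \in \cG :\, v_n(\Qinf, g, \gbarinf) \le \delta} \left|\frac{1}{\sqrt{n}}\sum_{i=1}^n \Delta_i(g)\right| = O_P\!\left(\supintegralTwo{\delta}{\cG}\right),
\end{equation*}
which is the dependent-data counterpart of Dudley's entropy bound invoked in the proof of Theorem \ref{theorem:equicontinuity}.

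Finally, since $\gbar_n$ lies in $\cG$ (or its convex hull, for which the same sup-norm covering bound holds up to constants) and $v_n(\Qinf,\gbar_n,\gbarinf) = o_P(1)$ by Assumption \ref{assumption:sigma_N_convergence_gbar}, we may choose $\delta_n \downarrow 0$ slowly enough that $P\bigl(v_n(\Qinf,\gbar_n,\gbarinf) \le \delta_n\bigr) \to 1$. Evaluating the uniform bound at $g = \gbar_n$ and $\delta = \delta_n$ then gives $\sqrt{n}\,|M_{3,n}(\Qinf,\gbar_n,\gbarinf)| = o_P\bigl(\supintegralTwo{\delta_n}{\cG}\bigr) = o_P(1)$, as required. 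The main obstacle is the uniform maximal inequality itself: a classical Dudley-style chaining bound presupposes independent increments, whereas here the conditional variances $E[\Delta_i(g)^2 \mid \cF_{i-1}]$ depend on the data-driven $g_i$ and thus on the history. This is overcome by appealing to the martingale chaining machinery of \cite{zhang2024evaluating}, in which the Orlicz function $\rho(x)=e^x-x-1$ and the scale $c_2 = 8/\eta^2$ are calibrated so that the Bennett--Bernstein exponential tails of the bounded martingale differences are exactly matched by the sub-exponential Orlicz norm controlling the chaining argument.
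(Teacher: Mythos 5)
Your argument is correct and follows essentially the same route as the paper: the paper gives no self-contained proof of this theorem, instead deferring to the martingale equicontinuity results of \cite{zhang2024evaluating}, and your steps --- cancellation of $\eic_2$, the martingale-difference structure of $\Delta_i(g)$ for fixed $g$, and a chaining/maximal inequality over a sup-norm cover of $\cG$ localized by the conditional-variance proxy $v_n$, evaluated at $g=\gbar_n$ with $\delta_n \downarrow 0$ --- are exactly that machinery, in parallel with Theorem \ref{theorem:equicontinuity}. The only loose end is the parenthetical claim that the convex hull of $\cG$ has sup-norm covering numbers comparable to those of $\cG$ (false for general classes); it is cleaner to assume, as the paper implicitly does, that $\gbar_n$, $\gbarinf$ lie in a class $\cG$ that is uniformly bounded below, so that $g \mapsto \eic_1(\Qinf,g)$ is sup-norm Lipschitz and Assumption \ref{assumption:reasonable_covering_integral_gbar} transfers to the increment class.
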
 

Combining the asymptotic normality and equicontinuity conditions,
we establish the asymptotic normality of the proposed \gbarTMLE $\psi(Q_n^*)$.

\begin{theorem}[Asymptotic Normality of \gbarTMLE]
\label{theorem:asympotic_normality_proposed_tmle}
    Suppose Assumptions \ref{assumption:strong_positivity_of_average_design},\ref{assumption:stabilized_variance} (or \ref{assumption:stabilized_variance_gbarinf}),  \ref{assumption:reasonable_covering_integral}, \ref{assumption:sigma_N_convergence},
    \ref{assumption:reasonable_covering_integral_gbar}, and \ref{assumption:sigma_N_convergence_gbar} hold, 
    then $\psi(Q_n^*)$ is asymptotically normal:
    \[
    \sqrt{n} \left[\psi(Q_n^*) - \psi(Q_0) \right] \dto N(0,\sigma_0^2).
    \]

\end{theorem}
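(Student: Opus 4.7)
The plan is to assemble the conclusion directly from the three results already established in this section, using the identity in Theorem \ref{theorem:tmle_decomposition} as the backbone. I would first multiply the decomposition
\[
\psi(Q_n^*) - \psi(Q_0) = M_{1,n}(\Qinf,\gbarinf) + M_{2,n}(Q_n^*,\Qinf,\gbar_n) + M_{3,n}(\Qinf,\gbar_n,\gbarinf)
\]
through by $\sqrt{n}$, then show that the leading term converges weakly to a Gaussian while the two remaining terms contribute only $o_P(1)$, and finally invoke Slutsky's theorem.

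For the leading piece, Theorem \ref{theorem:asynormal_first} yields, under Assumptions \ref{assumption:strong_positivity_of_average_design} and \ref{assumption:stabilized_variance} (or its alternative \ref{assumption:stabilized_variance_gbarinf}), the martingale central-limit statement $\sqrt{n}\,M_{1,n}(\Qinf,\gbarinf) \dto N(0,\sigma_0^2)$. For the two second-order correction terms, Theorem \ref{theorem:equicontinuity} delivers $M_{2,n}(Q_n^*,\Qinf,\gbar_n) = o_P(n^{-1/2})$ under Assumptions \ref{assumption:strong_positivity_of_average_design}, \ref{assumption:reasonable_covering_integral}, and \ref{assumption:sigma_N_convergence}, while Theorem \ref{theorem:equicontinuity_gbar} delivers $M_{3,n}(\Qinf,\gbar_n,\gbarinf) = o_P(n^{-1/2})$ under Assumptions \ref{assumption:strong_positivity_of_average_design}, \ref{assumption:reasonable_covering_integral_gbar}, and \ref{assumption:sigma_N_convergence_gbar}. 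Consequently, both $\sqrt{n}\,M_{2,n}$ and $\sqrt{n}\,M_{3,n}$ are $o_P(1)$.

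Combining the three pieces through Slutsky's theorem yields $\sqrt{n}[\psi(Q_n^*) - \psi(Q_0)] \dto N(0,\sigma_0^2)$, which is the claim. The hypotheses of the present theorem are precisely the union of those required by Theorems \ref{theorem:asynormal_first}, \ref{theorem:equicontinuity}, and \ref{theorem:equicontinuity_gbar}, so no additional condition is introduced; the only bookkeeping is to fix the limits $\Qinf$ and $\gbarinf$ of $\bQ_n^*$ and $\gbar_n$ once and reuse the same choices across the three ingredient results. The real obstacle in this result therefore does not lie in the final combination itself but in having first secured the two equicontinuity theorems, where the martingale covering-number machinery and the Bernstein-type Orlicz-norm control of the score increments do the substantive work; once those are in hand, the present theorem follows as a clean application of Slutsky's lemma.
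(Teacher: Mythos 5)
Your proposal is correct and matches the paper's own argument exactly: the paper obtains Theorem \ref{theorem:asympotic_normality_proposed_tmle} by combining the decomposition of Theorem \ref{theorem:tmle_decomposition} with the martingale CLT for $M_{1,n}$ (Theorem \ref{theorem:asynormal_first}) and the two equicontinuity results for $M_{2,n}$ and $M_{3,n}$ (Theorems \ref{theorem:equicontinuity} and \ref{theorem:equicontinuity_gbar}), then concludes via Slutsky. Your accounting of which assumptions feed which ingredient, and your observation that the substantive work lies in the equicontinuity theorems rather than the final assembly, are both accurate.
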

Note that our assumptions do not require that $\bQ_n^*$ converges to a $\bQinf$ that equals the true conditional mean function $\bQ_0$. Thus  \gbarTMLE has the double robustness property despite using a misspecified estimate of $\bQ_0$, since the average design $\gbar_n$ is known.

\section{Semiparametric Efficiency}
\label{section:efficiency}
In this section, we provide semiparametric efficiency results of \gbarTMLE for adaptive experiments.
We start by defining a Hilbert space: 
$
\cH(Q,g) := \{h \in L_2^0(P_{Q,g}): h(o)=h_w(w)+h_{y}(w,a,y),\int h_w(w)dQ_w(w)  = 0, \int h_y(w,a,y)dQ_y(y|a,w)  = 0\}
$ for any $Q = (Q_W,Q_Y)\in \cQ$, with a reference treatment randomization distribution $g \in \cG$ and the corresponding density $p_{Q,g}(o)=q_w(w) g(a|w) q_y(y|a,w)$ under $P_{Q,g}$. For any $h_1,h_2 \in \cH(Q,g)$, define their inner product as $<h_1,h_2>_{\cH(Q,g)} := \int h_1(o)h_2(o)dP_{Q,g}(o)$.
Define a class of paths: 
\[\{Q^{h}_{\epsilon}=(Q_{w,\epsilon}^h,Q_{y,\epsilon}^h):dQ_{w,\epsilon}^h=(1+\epsilon h_w) dQ_w, dQ_{y,\epsilon}^h=(1+\epsilon h_y) dQ_y, h \in \cH(Q, \gbarinf)\}
\]
for any $Q \in \cQ$. 
We denote by $P^{n}_{Q_{\epsilon_n}^h}$ the fluctuated distribution of $P^{n}_{Q}$ with density $p^{n}_{Q_{\epsilon_n}^h,g_1,\cdots,g_n}$, where $\epsilon_n = 1/\sqrt{n}$. 

In the following, we establish the semiparametric efficiency of \gbarTMLE in adaptive design settings by showing that it achieves the minimal asymptotic variance among regular estimators in the associated asymptotically normal experiment, as implied by the convolution theorem (Theorem 3.11.2 in \cite{van1996weak}). The proofs are in Appendix \ref{app:proofs}.

\subsection{Asymptotically Normal Experiment}
We first analyze the sequence of statistical experiments $(\cbO(n),\cF_n, P^{n}_{Q_{\epsilon_n}^h}:h \in \cH(Q,\gbarinf))$ and show that it is asymptotically normal. 
    Let $\Delta_{n,h} = \frac{1}{\sqrt{n}}\sum_{i=1}^n h(O_i) + o_{P_Q^n}(1)$, and $\norm{h}_{\cH(Q,\gbarinf)}^2 = \int h^2(o) dP_{Q,\gbarinf}(o)$.
    Let $\{\Delta_h:h \in \cH(Q,\gbarinf)\}$ be an iso-Gaussian process with zero mean and covariance $P_{Q,\gbarinf} \Delta_{h_1}\Delta_{h_2}=<h_1, h_2>_{\cH(Q,\gbarinf)}$.
\begin{assumption}[Hilbert Space Norm Stability under Adaptive Design]
\label{assumption:Hilbert_space_approximation}
    For any $h \in \cH(Q,\gbarinf)$, $\norm{h}^2_{\cH(Q,\gbar_n)} = \norm{h}^2_{\cH(Q,\gbarinf)} + o_{P^n_{Q}}(1)$.
\end{assumption}
\begin{theorem}[Asymptotically Normal Experiment]
\label{theorem:our_LAN}
Suppose Assumption \ref{assumption:Hilbert_space_approximation} holds. Then, the experiment is asymptotically (shift) normal, i.e., for each $h \in \cH(Q_0,\gbarinf)$, 
    \[\log\frac{dP^{n}_{Q_{\epsilon_n}^h}(O_1,\cdots,O_n)}{dP^n_{Q}(O_1,\cdots,O_n)} = \Delta_{n,h}-\frac{1}{2}\norm{h}_{\cH(Q,\gbarinf)}^2 
    \]
    such that  $\Delta_{n,h} \dto \Delta_{h}$ under $P_Q^n$, and $\Delta_{h} \sim N(0, \norm{h}_{\cH(Q,\gbarinf)}^2)$.
\end{theorem}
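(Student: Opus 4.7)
The plan is to compute $\log(dP^n_{Q_{\epsilon_n}^h}/dP^n_Q)$ directly, Taylor-expand, and harness the martingale structure induced by the adaptive experiment to identify the linear and quadratic terms. The key starting observation is that each treatment randomization function $g_i$ depends on $\bO(i-1)$ but not on $Q$, so the factors $\prod_i g_i(a_i\mid w_i)$ cancel in the ratio, leaving
\[
\log\frac{dP^n_{Q_{\epsilon_n}^h}}{dP^n_Q} = \sum_{i=1}^n \log\bigl(1+\epsilon_n h_w(W_i)\bigr) + \sum_{i=1}^n \log\bigl(1+\epsilon_n h_y(W_i,A_i,Y_i)\bigr).
\]
I would Taylor-expand each summand via $\log(1+x) = x - x^2/2 + O(x^3)$. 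With $\epsilon_n = 1/\sqrt{n}$, the linear piece is $\frac{1}{\sqrt{n}}\sum_i h(O_i)$, which will serve as $\Delta_{n,h}$ up to $o_P(1)$; the quadratic piece is $-\tfrac{1}{2n}\sum_i \bigl[h_w^2(W_i) + h_y^2(W_i,A_i,Y_i)\bigr]$; and the cubic remainder is $O_P(n^{-1/2})$ under an integrability or boundedness condition on $h$ (handled by a standard truncation argument if $h$ is merely in $L_2$).

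Next I would identify the quadratic term's limit via a martingale weak law. The orthogonality constraints defining $\cH(Q,g)$ force $E[h_w(W_i) h_y(O_i) \mid \bO(i-1)] = 0$, so
\[
E\bigl[h_w^2(W_i) + h_y^2(W_i,A_i,Y_i) \mid \bO(i-1)\bigr] = \int (h_w + h_y)^2\, dP_{Q,g_i} = \norm{h}^2_{\cH(Q,g_i)}.
\]
Averaging over $i$ and using $\frac{1}{n}\sum_i P_{Q,g_i} = P_{Q,\gbar_n}$ identifies the compensator as $\norm{h}^2_{\cH(Q,\gbar_n)}$, which by Assumption \ref{assumption:Hilbert_space_approximation} converges in probability to $\norm{h}^2_{\cH(Q,\gbarinf)}$. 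A martingale WLLN then delivers the quadratic limit $-\tfrac{1}{2}\norm{h}^2_{\cH(Q,\gbarinf)} + o_P(1)$. The same computation shows $\{h(O_i)\}$ is a martingale difference sequence with predictable quadratic variation $\norm{h}^2_{\cH(Q,\gbar_n)} \pto \norm{h}^2_{\cH(Q,\gbarinf)}$, so the Brown (1971) martingale CLT already invoked for Theorem \ref{theorem:asynormal_first} yields $\Delta_{n,h} \dto N\bigl(0,\norm{h}^2_{\cH(Q,\gbarinf)}\bigr)$. The iso-Gaussian covariance $P_{Q,\gbarinf} \Delta_{h_1} \Delta_{h_2} = \langle h_1, h_2\rangle_{\cH(Q,\gbarinf)}$ follows by polarization: apply the same argument to $c_1 h_1 + c_2 h_2$ and match coefficients.

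The main obstacle is the verification of the martingale WLLN and CLT in the adaptive setting where per-step conditional moments depend on the evolving $g_i$; Assumption \ref{assumption:Hilbert_space_approximation} is precisely the bridge that collapses the $\gbar_n$-averaged compensator to the $\gbarinf$-based Hilbert norm in the limit. A secondary subtlety is the conditional Lindeberg condition for unbounded $h$, which I would handle by truncating $h \mapsto h\,\mathbf{1}\{|h|\le M\}$, passing to the limit, and then letting $M \to \infty$, or by importing uniform integrability of $\{h^2(O_i)\}$ under the adaptive design (a mild strengthening of Assumption \ref{assumption:Hilbert_space_approximation}).
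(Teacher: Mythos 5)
Your proposal is correct and follows essentially the same route as the paper's proof: a direct Taylor expansion of the log-likelihood ratio (after the $g_i$ factors cancel), replacement of the empirical quadratic term by its martingale compensator $\norm{h}^2_{\cH(Q,\gbar_n)}$, invocation of Assumption \ref{assumption:Hilbert_space_approximation} to pass to $\norm{h}^2_{\cH(Q,\gbarinf)}$, and a martingale CLT for the linear term. Your added details (the vanishing cross term $h_w h_y$, the truncation argument for the Lindeberg condition, and polarization for the covariance) are refinements of, not departures from, the paper's argument, which simply assumes $E[h^2(O_i)\mid\bO(i-1)]$ bounded and cites Theorem 2.23 of Hall and Heyde.
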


\subsection{Regularity and Efficiency}
Now we introduce the definition of regular estimator and present the limit distribution for regular estimators of regular parameters based on the convolution theorem.
Define a sequence of parameters $\psi(Q_{\epsilon_n}^{h}):=E_{Q_{w,\epsilon_n}^h}[\bQ_{y,\epsilon_n}^h(1,W)-\bQ_{y,\epsilon_n}^h(0,W)]$ for every $h \in \cH(Q,\gbarinf)$, where $Q_{w,\epsilon_n}^h$ denotes the perturbed distribution of $Q_W$ and $\bQ_{y,\epsilon_n}^h$ is the conditional mean function of the outcome under the perturbed distribution of $Q_Y$. 

For every $h \in \cH(Q,\gbarinf)$, the sequence of parameters $\psi(Q_{\epsilon_n}^{h})$ is regular in that
\[
\sqrt{n} \left(\psi(Q_{\epsilon_n}^{h})-\psi(Q)\right) \to \dot{\psi}(Q)(h),
\]
for a continuous, linear map $\dot{\psi}: \cH(Q,\gbarinf) \to \R$ and a certain norming operator $\sqrt{n}$.
Here, $\dot{\psi}(Q)(h)$ is the pathwise derivative of $\psi(Q_{\epsilon}^{h})$ at $\epsilon = 0$ and equals $<D(Q,\gbarinf), h>_{\cH(Q,\gbarinf)}$ by Riesz representation.

\begin{definition}[Regular Estimator]
    If a sequence of estimators $T_n(O_1,\cdots,O_n)$ satisfies
    \[
    \sqrt{n}\left(T_n - \psi(Q_{\epsilon_n}^{h})\right) \dto L
    \]
    under $P^n_{Q_{\epsilon_n}^h}$ for every $h \in \cH(Q,\gbarinf)$ for a fixed, tight Borel probability measure $L$ on $\R$, then $T_n$ is regular with respect to the norming operators $\sqrt{n}$.
\end{definition}

\begin{theorem}[Limit Distribution of Regular Estimators via the Convolution Theorem]
\label{theorem:LAN_conv}
    Suppose Assumption \ref{assumption:Hilbert_space_approximation} holds such that the sequence of statistical experiments $(\cbO(n),\cF_n, P^{n}_{Q_{\epsilon_n}^h}:h \in \cH(Q,\gbarinf))$ is asymptotically normal. Then, for regular estimators $T_n$ of $\psi(Q)$, the limit distribution of  $\sqrt{n}(T_n - \psi(Q))$ equals the distribution of a sum $Z+B$ of independent and tight random variables such that 
    \[
    Z \sim N(0, \norm{\dot{\psi}^*(Q)}_{\cH(Q,\gbarinf)}^2),
    \]
    where $\dot{\psi}^*(Q)$ is the adjoint of $\dot{\psi}(Q)$ such that $<\dot{\psi}^*(Q),h>_{\cH(Q,\gbarinf)}=\dot{\psi}(Q)(h)$.
\end{theorem}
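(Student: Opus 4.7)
The plan is to reduce the statement to a direct application of the classical convolution theorem (Theorem 3.11.2 in \cite{van1996weak}), which requires three ingredients: (i) the sequence of experiments is asymptotically (Gaussian-shift) normal with tangent Hilbert space $\cH(Q,\gbarinf)$, (ii) the target parameter $\psi$ is pathwise differentiable along the submodels $\{Q^{h}_{\epsilon}\}$ with a continuous linear derivative, and (iii) the estimator sequence $T_n$ is regular. Ingredient (i) is furnished by Theorem \ref{theorem:our_LAN} under Assumption \ref{assumption:Hilbert_space_approximation}, which yields the log-likelihood expansion
\[
\log\frac{dP^{n}_{Q_{\epsilon_n}^h}}{dP^n_Q} = \Delta_{n,h} - \tfrac{1}{2}\norm{h}_{\cH(Q,\gbarinf)}^2 + o_{P_Q^n}(1),
\]
with $\Delta_{n,h} \dto \Delta_h$ an iso-Gaussian process of covariance $\langle h_1, h_2\rangle_{\cH(Q,\gbarinf)}$. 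Ingredient (ii) is the pathwise differentiability stated immediately before the theorem, whose derivative $\dot{\psi}(Q)(h) = \langle D(Q,\gbarinf), h\rangle_{\cH(Q,\gbarinf)}$ is, by Riesz representation, continuous and linear on $\cH(Q,\gbarinf)$. Ingredient (iii) holds by hypothesis on $T_n$.

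With the three ingredients in place, I would invoke the convolution theorem to conclude that any weak limit $L$ of $\sqrt{n}(T_n - \psi(Q))$ under $P_Q^n$ is the distribution of a sum $Z + B$ of independent, tight random variables with $Z \sim N\!\left(0, \norm{\dot{\psi}^*(Q)}_{\cH(Q,\gbarinf)}^2\right)$. The adjoint $\dot{\psi}^*(Q) \in \cH(Q,\gbarinf)$ is the Riesz representer of the continuous linear functional $\dot{\psi}(Q)$, characterized by $\langle \dot{\psi}^*(Q), h\rangle_{\cH(Q,\gbarinf)} = \dot{\psi}(Q)(h)$ for every $h \in \cH(Q,\gbarinf)$; its existence and uniqueness follow directly from the Riesz representation theorem applied to the Hilbert space $\cH(Q,\gbarinf)$, and the identification of the variance of the Gaussian component with $\norm{\dot{\psi}^*(Q)}_{\cH(Q,\gbarinf)}^2$ is then immediate from the general statement of the convolution theorem.

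The main obstacle I anticipate is confirming that the convolution theorem, most commonly stated for i.i.d.\ experiments, still applies in our non-i.i.d.\ adaptive-design setting. Once Theorem \ref{theorem:our_LAN} has recast the local log-likelihood as a Gaussian shift with the single inner product $\langle \cdot, \cdot\rangle_{\cH(Q,\gbarinf)}$ governing both the score covariance and the parameter derivative, the remainder of the convolution-theorem proof is a purely functional-analytic, weak-convergence argument that transfers verbatim. Assumption \ref{assumption:Hilbert_space_approximation}, which stabilizes the relevant Hilbert norm across the sequence of reference measures $\gbar_n$, is exactly what allows ingredients (i) and (ii) to be expressed in the same inner product, so that the Gaussian covariance from the LAN expansion and the Riesz representer of $\dot{\psi}(Q)$ are compatible. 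Beyond this stabilization, no further machinery specific to the martingale structure of the adaptive sampling is required, and the conclusion of Theorem \ref{theorem:LAN_conv} follows.
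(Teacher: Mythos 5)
Your proposal is correct and matches the paper's treatment: the paper itself proves Theorem \ref{theorem:LAN_conv} by establishing the asymptotically normal experiment via Theorem \ref{theorem:our_LAN} under Assumption \ref{assumption:Hilbert_space_approximation} and then directly invoking the abstract convolution theorem (Theorem 3.11.2 of \cite{van1996weak}), exactly as you do. Your only anticipated obstacle is moot, since that theorem is stated for general asymptotically normal sequences of experiments indexed by a subset of a Hilbert space rather than for i.i.d.\ data, so no transfer argument is needed once the Gaussian-shift structure in $\cH(Q,\gbarinf)$ is in place.
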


We establish the regularity and efficiency of \gbarTMLE as follows. 
\begin{theorem}[Regularity and Efficiency]\label{theorem:efficiency}
Suppose Assumptions \ref{assumption:strong_positivity_of_average_design},
\ref{assumption:stabilized_variance_gbarinf},  \ref{assumption:reasonable_covering_integral}, \ref{assumption:sigma_N_convergence},
    \ref{assumption:reasonable_covering_integral_gbar}, \ref{assumption:sigma_N_convergence_gbar} and
    \ref{assumption:Hilbert_space_approximation} hold with $\Qinf=Q_0$, then
    $\psi(\bQ_n^*)$ is regular and efficient, i.e., the asymptotic variance of $\sqrt{n} (\psi(Q_n^*)-\psi(Q_0))$ achieves the minimal asymptotic variance $\norm{\dot{\psi}^*(Q_0)}_{\cH(Q_0,\gbarinf)}^2$ among all regular estimators.
\end{theorem}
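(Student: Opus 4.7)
The plan is to combine the martingale-driven asymptotic linear expansion of \gbarTMLE with the LAN framework and convolution theorem to simultaneously establish regularity and efficiency. First, tracing the proof of Theorem~\ref{theorem:asympotic_normality_proposed_tmle} with $\Qinf=Q_0$, the decomposition in Theorem~\ref{theorem:tmle_decomposition} reduces to
\[
\sqrt{n}\bigl[\psi(Q_n^*)-\psi(Q_0)\bigr]
= \frac{1}{\sqrt{n}}\sum_{i=1}^n\bigl[D(Q_0,\gbarinf)(O_i)-P_{Q_0,g_i}D(Q_0,\gbarinf)\bigr]+o_P(1),
\]
so \gbarTMLE is asymptotically linear with martingale-difference innovations driven by the influence function $D(Q_0,\gbarinf)$, with limiting variance $\sigma_0^2=P_{Q_0,\gbarinf}D(Q_0,\gbarinf)^2$ under Assumption~\ref{assumption:stabilized_variance_gbarinf}. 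Because $D(Q_0,\gbarinf)$ is the efficient influence function of $\psi$ in the i.i.d.\ experiment with density $p_{Q_0,\gbarinf}$, and its components already lie in $\cH(Q_0,\gbarinf)$, it satisfies $\dot{\psi}(Q_0)(h)=\langle D(Q_0,\gbarinf),h\rangle_{\cH(Q_0,\gbarinf)}$ for every $h\in\cH(Q_0,\gbarinf)$; the Riesz identification invoked in Theorem~\ref{theorem:LAN_conv} then gives $\dot{\psi}^*(Q_0)=D(Q_0,\gbarinf)$ and hence $\norm{\dot{\psi}^*(Q_0)}_{\cH(Q_0,\gbarinf)}^2=\sigma_0^2$.

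To establish regularity, I would apply Le Cam's third lemma along one-dimensional sub-experiments indexed by $h\in\cH(Q_0,\gbarinf)$. Theorem~\ref{theorem:our_LAN} gives $\log(dP^n_{Q^h_{\epsilon_n}}/dP^n_{Q_0})=\frac{1}{\sqrt{n}}\sum_i h(O_i)-\tfrac12\norm{h}^2_{\cH(Q_0,\gbarinf)}+o_P(1)$, and the linear expansion above supplies the matching martingale representation for the estimator. A multivariate martingale CLT then yields joint asymptotic normality of the pair (estimator, log-likelihood ratio) whose cross-covariance equals the limit of the predictable cross-variation
\[
\frac{1}{n}\sum_{i=1}^n P_{Q_0,g_i}\bigl[D(Q_0,\gbarinf)\cdot h\bigr]=P_{Q_0,\gbar_n}\bigl[D(Q_0,\gbarinf)\cdot h\bigr].
\]
Polarization applied to Assumption~\ref{assumption:Hilbert_space_approximation}, using $D(Q_0,\gbarinf)\in\cH(Q_0,\gbarinf)$, identifies this limit with $P_{Q_0,\gbarinf}[D(Q_0,\gbarinf)\cdot h]=\dot{\psi}(Q_0)(h)$. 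Le Cam's third lemma then yields $\sqrt{n}(\psi(Q_n^*)-\psi(Q_0))\dto N(\dot{\psi}(Q_0)(h),\sigma_0^2)$ under $P^n_{Q^h_{\epsilon_n}}$, and pathwise differentiability gives $\sqrt{n}(\psi(Q^h_{\epsilon_n})-\psi(Q_0))\to\dot{\psi}(Q_0)(h)$; subtraction delivers $\sqrt{n}(\psi(Q_n^*)-\psi(Q^h_{\epsilon_n}))\dto N(0,\sigma_0^2)$ for every $h$, which is regularity with a fixed Gaussian limit law independent of $h$.

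Combining regularity with the variance identification, Theorem~\ref{theorem:LAN_conv} forces the $B$ component of the convolution decomposition to be degenerate at zero, so \gbarTMLE attains the semiparametric efficiency bound $\norm{\dot{\psi}^*(Q_0)}_{\cH(Q_0,\gbarinf)}^2$. The main obstacle I anticipate is the cross-covariation step: Assumption~\ref{assumption:Hilbert_space_approximation} is stated only for squared norms of individual elements of $\cH(Q_0,\gbarinf)$, so transferring this stability to the bilinear form $P_{Q_0,\gbar_n}[D(Q_0,\gbarinf)\cdot h]$ through polarization requires verifying that $D(Q_0,\gbarinf)+h$ and $D(Q_0,\gbarinf)-h$ themselves belong to $\cH(Q_0,\gbarinf)$, which follows from closure under addition once one checks the zero-mean conditions on each component. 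The remaining Lindeberg and predictable-variation conditions for the multivariate martingale CLT should follow routinely from boundedness of $Y\in[0,1]$, Assumption~\ref{assumption:strong_positivity_of_average_design}, and the diagonal quadratic-variation limit supplied by Assumption~\ref{assumption:stabilized_variance_gbarinf}.
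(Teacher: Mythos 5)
Your proposal is correct and follows essentially the same route as the paper: asymptotic linearity of \gbarTMLE under $P^n_{Q_0}$ via Theorem~\ref{theorem:asympotic_normality_proposed_tmle}, the LAN expansion of Theorem~\ref{theorem:our_LAN}, Le Cam's third lemma to get the shifted limit under $P^n_{Q^h_{\epsilon_n}}$ and hence regularity, and the convolution theorem for the efficiency bound. Your explicit treatment of the joint martingale CLT and the polarization of Assumption~\ref{assumption:Hilbert_space_approximation} to identify the cross-covariance $P_{Q_0,\gbar_n}[D(Q_0,\gbarinf)\cdot h]\to\langle D(Q_0,\gbarinf),h\rangle_{\cH(Q_0,\gbarinf)}$ is a correct and welcome filling-in of a step the paper leaves implicit when it invokes Le Cam's third lemma.
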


\begin{remark}[Comparison of two TMLEs]
We compare \gbarTMLE $\psi(Q_n^*)$ with \giTMLE $\psi(Q_n^{\dagger})$ based on the results of Sections \ref{section:asy_normality} and \ref{section:efficiency}.
First, to construct \giTMLE $\psi(Q_n^{\dagger})$, the positivity assumption is: $\inf_{a \in \cA, w \in \cW}g_i(a|w) \geq \zeta$ for every single $g_i$ ($i = 1, \cdots, n$).
In comparison, the positivity assumption for the proposed \gbarTMLE $\psi(Q_n^*)$ is less restrictive, as it is only required on the overall average design $\gbar_n$ instead of every $g_i$.
This enables a more flexible adaptive design for conducting statistical inference. For example, one can consider a design which allows more or less exploration and exploitation of treatment arms across different time points during the experiment and even permits cases where positivity is locally violated for some units, as long as the overall average design does not break the positivity assumption. Moreover, the establishment of \gbarTMLE only requires accessibility of the average design $\gbar_n$, making the estimation procedure feasible when reporting every single $g_i$ is impractical due to operational or privacy constraints.

Second, for \giTMLE $\psi(\bQ_n^\dagger)$ to be asymptotically normal, $g_n$ needs to stabilize such that $\frac{1}{n}\sum_{i=1}^nP_{Q_0,g_i}D(\Qinf,g_i)^2$ converges as $n \to \infty$.
In contrast, our proposed \gbarTMLE $\psi(\bQ_n^*)$ requires only the average design $\gbar_n$ to stabilize. 
Likewise, while the efficiency proof for \gbarTMLE can be extended to \giTMLE, the latter requires stronger positivity and convergence conditions on the adaptive design than those needed for \gbarTMLE.

Finally, suppose both estimators have the same $\Qinf$. By Jensen's inequality we have that
\begin{eqnarray*}
    && \frac{1}{n}\sum_{i=1}^nP_{Q_0,g_i}D(\Qinf,g_i)^2 \\
    &=& P_{Q_{0,W}} \frac{1}{n}\sum_{i=1}^n \frac{E[Y-\bQinf(1,W)|A=1,W]^2}{g_i(1|W)} +  P_{Q_{0,W}}\frac{1}{n}\sum_{i=1}^n \frac{E[Y-\bQinf(0,W)|A=0,W]^2}{g_i(0|W)} \\
    &&+ P_{Q_{0,W}}\left[\bQinf(1,W)-\bQinf(0,W)-\psi(\Qinf)\right]^2\\
    &\geq& P_{Q_{0,W}} \frac{E[Y-\bQinf(1,W)|A=1,W]^2}{\gbar_n(1|W)} +  P_{Q_{0,W}}\frac{E[Y-\bQinf(0,W)|A=0,W]^2}{\gbar_n(0|W)} \\
    &&+ P_{Q_{0,W}}\left[\bQinf(1,W)-\bQinf(0,W)-\psi(\Qinf)\right]^2 \\
    &=& P_{Q_0,\gbar_n}D(\Qinf,\gbar_n)^2.
\end{eqnarray*}

This implies that $\psi(Q_n^*)$ is more efficient than $\psi(Q_n^{\dagger})$ in finite samples. 
\end{remark}
\section{Adaptive Designs Motivated by Efficiency Results}
\label{section:adaptive_design}
Beyond efficient estimation after the completion of a given adaptive experiment, we investigate how to design real-time adaptive experiments to further reduce the variance of the estimator for a specific causal estimand of interest.
Ideally, one can characterize an oracle experimental design that minimizes semiparametric efficiency bound for estimating that estimand using the data collected through that oracle design, which we denote by $\pi^*:\cW \times \cA \to [0,1]$. Using ATE as an example, it is known that the oracle design $\pi^*$ follows the conditional Neyman allocation:
$
\pi^*(a|w) := \frac{\sigma(a,w)}{\sigma(1,w)+\sigma(0,w)}, \label{formula:neyman_allocation}$
where $\sigma^2(a,w):=Var[Y|A=a,W=w]$. 

However, the oracle design is unknown \textit{a priori} in practice. 
This necessitates adaptive experimental designs that sequentially learn the oracle design from the accumulated data and determine treatment randomization probabilities for newly enrolled units according to the evolving estimate of the oracle design, with the aim of improving estimation efficiency.
Motivated by efficiency results of \gbarTMLE, in this section we propose a novel adaptive design that guides the average of adaptive design toward the oracle design, which we refer to ``$\gbar$-driven adaptive design''. 

This design is constructed as follows.
At the beginning of the experiment, we enroll \(n_0\) units and randomly assign treatments with a baseline probability (e.g. 0.5). Here, $n_0$ is a user-specified ``burn-in'' sample size to estimate the oracle design for adapting treatment randomization probabilities. 
After that, suppose we have enrolled $i$ units and define $\gbar_i = \frac{1}{i}\sum_{j=1}^ig_{j}$ as the current average of adaptive design.
Upon the enrollment of each new unit $i+1$, we use the accumulated data $\bO(i)=(O_1,\dots,O_i)$ to construct $\hpi_{i+1}^*$ as an estimate of the oracle design $\pi^*$ for the causal estimand of interest. In the example of ATE, $\hpi_{i+1}^*$ is constructed by first estimating the unknown conditional variance $\sigma^2(a,w)$ by $\hat{\sigma}^2_{1:i}(a,w)$ based on $\bO(i)$, and then plugging that into the Neyman allocation formula:
$
\hpi_{i+1}^*(a|w) = \frac{\hat{\sigma}_{1:i}(a,w)}{\hat{\sigma}_{1:i}(1,w)+\hat{\sigma}_{1:i}(0,w)}.
$
We then integrate this estimate into 
$
\hpibar_{i+1}^* := \frac{1}{i-n_0}\sum_{j = n_0+1}^{i+1} \hpi^*_{j}$ as the target for the next-step average design, which mitigates the variability inherent in adapting solely to the latest estimate and produces a smoother, more stable design trajectory in finite samples.
Finally, by solving $\gbar_{i+1} = \hpibar^*_{i+1}$ for the next average design, the treatment randomization function $g_{i+1}$ is given by
\[
g_{i+1} := (i+1)\hpibar^*_{i+1} - i\, \gbar_i,
\]
clipped to $[0,1]$.
This adaptation procedure is repeated upon enrollment of the new unit.

\begin{remark}[Comparison with standard Neyman-allocation adaptation]
Compared with the adaptive designs with standard Neyman allocation that directly sets $g_{i+1}=\hpi_{i+1}^*$, our $\gbar$-driven adaptive design adopts a ``self-calibrating'' approach:
if the average design in earlier stages presents higher probability to assign treatment (control) relative to the estimated optimal design, the $\gbar$-driven adaptive design compensates in subsequent stages by reducing the randomization probability for treatment (control) accordingly.
This design strategy offers global control over the adaptive design as a whole, steering the overall average design toward the oracle design in a stable manner, rather than optimizing $g_{i+1}$ alone in isolation without considering previous treatment randomization in the adaptive experiment. Together, the $\gbar$-driven design framework and efficient estimation framework of \gbarTMLE work jointly to improve estimation efficiency in answering various causal questions of interest.
\end{remark}

\section{Simulation Study}
\label{section:simulation}
\subsection{Setup}

This section presents simulation studies to evaluate the performance of the proposed estimator and adaptive design to minimize variance in estimating ATE as a representative example.
Specifically, we examine the following aspects: 
\textbf{(A)} Estimation: We evaluate the performance of \gbarTMLE in their efficiency in estimating ATE and compared that with \giTMLE across adaptive designs that either aim to improve estimation efficiency for the causal estimand or target a different objective.
\textbf{(B)} Design: With the goal of minimizing variance in ATE estimation, we evaluate the performance of the proposed $\gbar$-driven design and compare that with other alternative designs.

To answer these questions, we consider five types of experimental designs that span 10 time points, starting with $n_0=1000$ units enrolled at the initially, and 250 units addeded at each subsequent time point. Each design differs in its mechanism to determine treatment randomization probabilities given the accumulated data at each time point. One design is non-adaptive, assigning treatments and controls with equal probability ($g_i \equiv 0.5$ for every $i$) throughout the experiment. 
Another three designs are based on Neyman allocation:
one is a standard adaptive design that directly applies the current estimate of Neyman allocation for treatment randomization for the new enrollee that sets $g_i=\hpi_i^*$; another is our proposed $\gbar$-driven Neyman allocation design; and the third is an ideal oracle design, which applies the oracle Neyman allocation from the outset ($g_i = \pi^*$ for every $i$), as if it were known.
We also implement a ``benefit-driven'' design that estimates conditional average treatment effect (CATE) function $E[Y(1)-Y(0)|W]$ from the accumulated data and tilts the treatment randomization probabilities toward the estimate of optimal personalized treatment $I(E[Y(1)-Y(0)|W]>0)$, thereby increasing the expected outcome of new enrollees compared with that in the fixed non-adaptive design. Each design is repeated across 5000 Monte Carlo runs. More details are deferred to Appendix \ref{app:sim_details}. 

Figure \ref{fig:DGD1-Neyman_Benefit} displays the oracle Neyman allocation and CATE functions across baseline covariates.
The oracle Neyman allocation function varies across covariates, resulting in heterogeneous treatment randomization probabilities, while CATE function is also heterogeneous and follows a different pattern. Consequently, the benefit-driven design is expected to yield higher variance in ATE estimation than the adaptive designs using Neyman allocation.

\begin{figure}[tbh]
    \centering
    \includegraphics[width=0.3\linewidth]{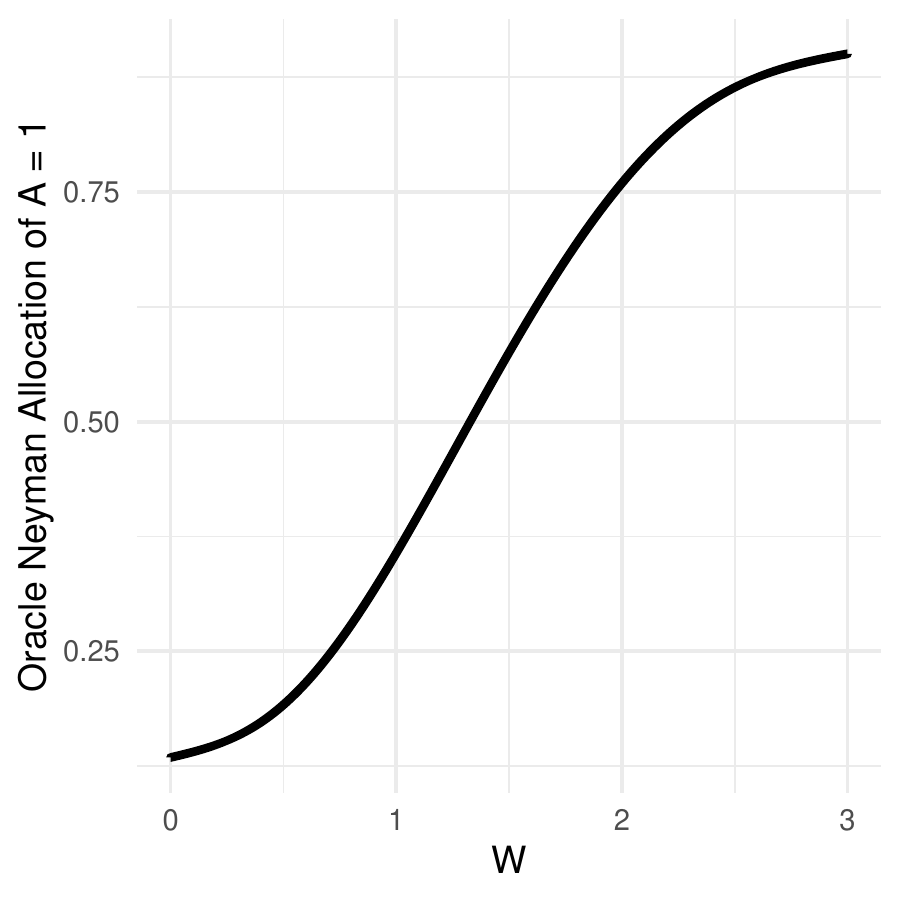}
    \includegraphics[width=0.3\linewidth]{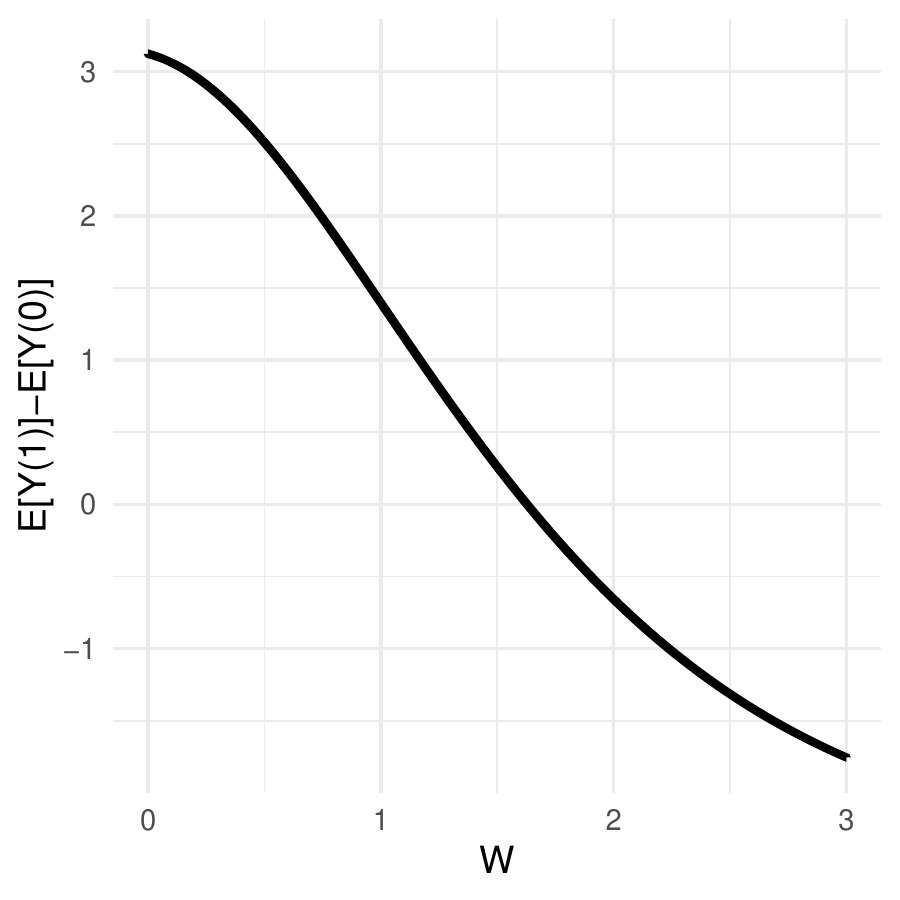}
    \caption{Left: Probability of assigning treatment assignment $a=1$ (y-axis) under the oracle Neyman allocation across baseline covariates (x-axis). Right: Conditional average treatment effect (CATE) (y-axis) across baseline covariates (x-axis).}
\label{fig:DGD1-Neyman_Benefit}
\end{figure}

\subsection{Results}
\textbf{(A) Estimation.}
Table \ref{tab:est_sim_1} presents the bias, variance, and MSE and the coverage rate of the 95\% Wald-type confidence interval of the proposed \gbarTMLE in adaptive experiements with benefit-driven design and standard Neyman-allocation design, respecitvely. It also reports the coverage of the 95\% Wald-type ``oracle confidence interval'' based on the estimator's Monte Carlo standard deviation of the point estimates, with its nominal coverage supporting the estimator's asymptotic normality.
\renewcommand{\arraystretch}{0.6}
\begin{table}[ht]
\centering
\footnotesize
\begin{tabular}{cccccccc}
\toprule
Design & Time &  Bias ($10^{-2}$) & Var. ($10^{-2}$) & MSE ($10^{-2}$) & Cov. (\%) & Oracle Cov. (\%) \\
\midrule
Benefit Driven & 2  &$-0.55$ & $6.22$ & $6.23$ & 93.7 & 94.6 \\
Benefit Driven & 4  &$-1.56$ & $5.37$ & $5.40$ & 94.4 & 94.5 \\
Benefit Driven & 6  &$-2.40$ & $5.00$ & $5.05$ & 95.0 & 94.6 \\
Benefit Driven & 8  &$-3.27$ & $4.82$ & $4.93$ & 94.6 & 94.4 \\
Benefit Driven & 10 &$-2.94$ & $4.91$ & $5.00$ & 94.2 & 94.4 \\ 
\midrule
Standard Neyman & 2  &$1.90$  & $5.49$ & $5.53$ & 92.9 & 94.9 \\
Standard Neyman & 4  &$2.93$  & $3.72$ & $3.81$ & 92.3 & 94.4 \\
Standard Neyman & 6  &$1.16$  & $2.94$ & $2.95$ & 92.0 & 94.7 \\
Standard Neyman & 8  &$1.65$  & $2.30$ & $2.33$ & 92.6 & 94.7 \\
Standard Neyman & 10 &$1.09$  & $2.03$ & $2.04$ & 92.6 & 95.1 \\
\bottomrule
\end{tabular}
\normalsize
\caption{Performance of \gbarTMLE in adaptive experiements with benefit-driven design and standard Neyman-allocation design at time points 2, 4, 6, 8, and 10. The table includes bias, variance, mean squared error (MSE), coverage probability of 95\% Wald-type confidence intervals based on estimated standard errors, along with coverage probability of the 95\% Wald-type ``oracle confidence intervals'' constructed by the Monte Carlo standard deviation of the point estimates.}
\label{tab:est_sim_1}
\end{table}

Figure \ref{fig:DGD1-rel_var_gbar_gi} compares the relative variance of the proposed TMLE estimator \gbarTMLE $\psi(\bQ_n^*)$ with that of \giTMLE $\psi(\bQ_n^{\dagger})$. 
The results show that \gbarTMLE $\psi(\bQ_n^*)$ significantly reduces variance compared with \giTMLE in adaptive experiments, under both benefit-driven design and standard Neyman allocation, demonstrating \gbarTMLE's efficiency across different adaptive experiments with varying design objectives.

\begin{figure}
\centering'\includegraphics[width=0.48\linewidth]{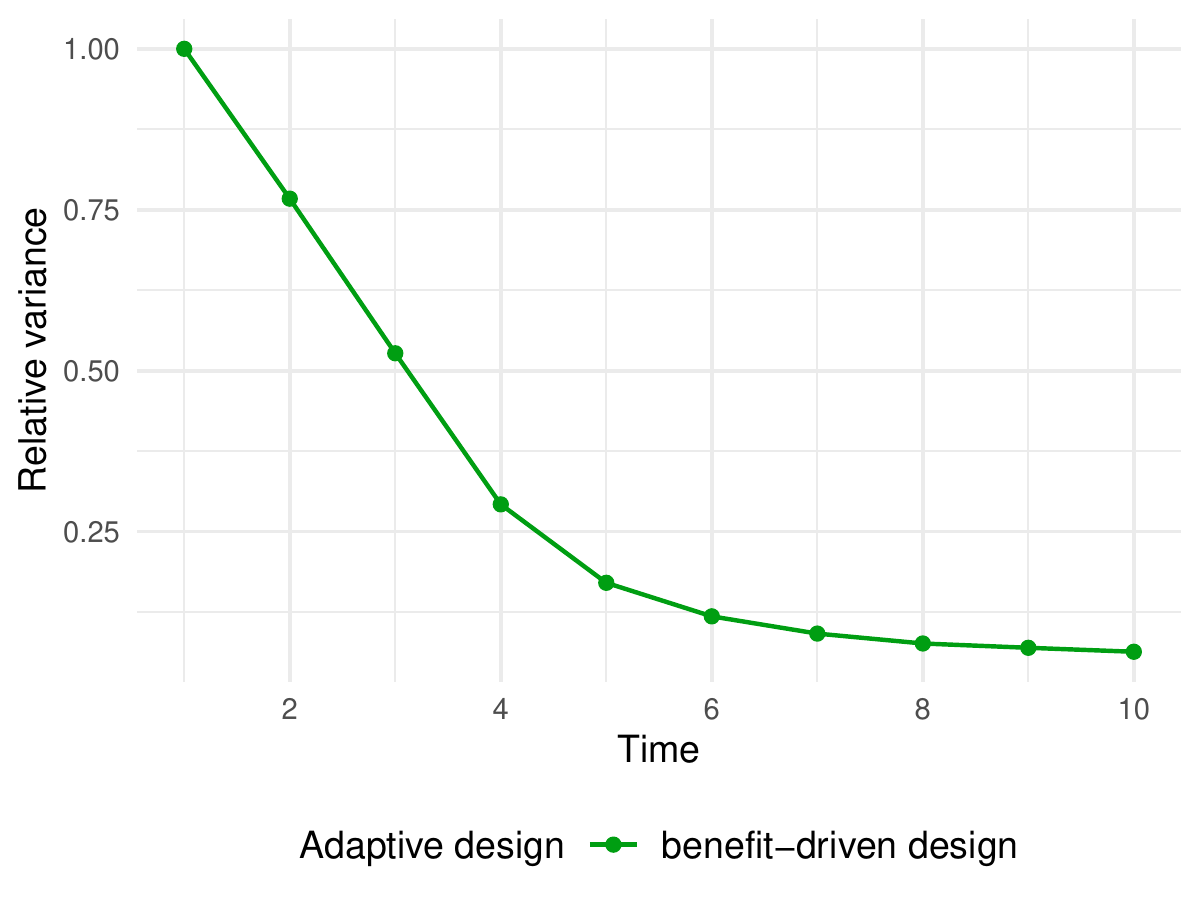}
    \includegraphics[width=0.48\linewidth]{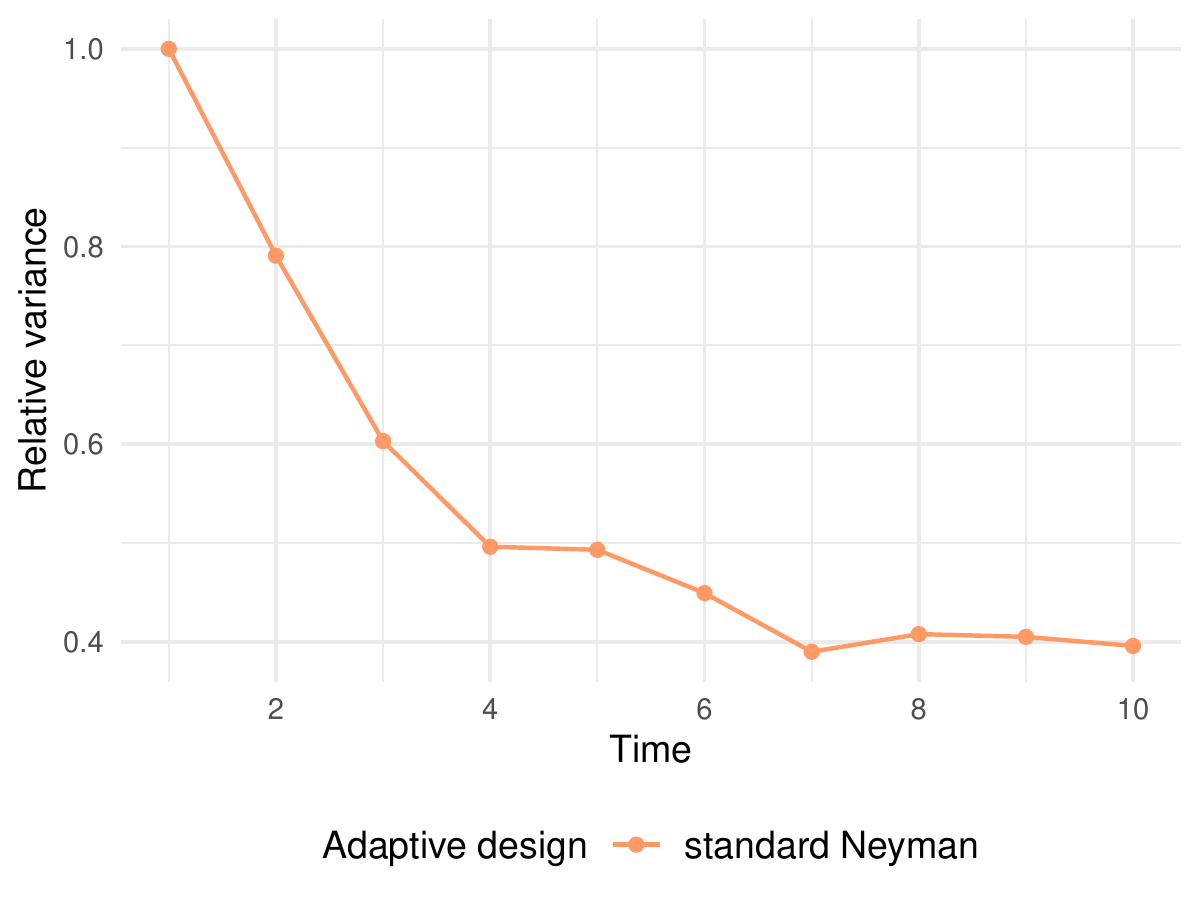}

    \caption{Relative variance of \gbarTMLE $\psi(\bQ_n^*)$ versus \giTMLE $\psi(\bQ_n^{\dagger})$ over the course of adaptive experiments with benefit-driven design (left) and with standard Neyman allocation (right). In each plot, the x-axis is the time point of the adaptive experiment, and the y-axis is the relative variance of the two TMLEs, each constructed by the cumulative data up to that time point.}
    \label{fig:DGD1-rel_var_gbar_gi}
\end{figure}

\textbf{(B) Design.}
Figure \ref{fig:DGD1-design_rel_var} presents the relative variance of \gbarTMLE under various designs compared with that in the non-adaptive design with fixed treatment randomization probability 0.5.
The results show that both adaptive designs with standard Neyman allocation and the proposed $\gbar$-driven Neyman allocation yield lower variance than the non-adaptive design. Moreover, the $\bar{g}$-driven adaptive design consistently results in a smaller variance than the adaptive design with standard Neyman allocation and approaches closer to the oracle design.

\begin{figure}
    \centering
    \includegraphics[width=0.8\linewidth]{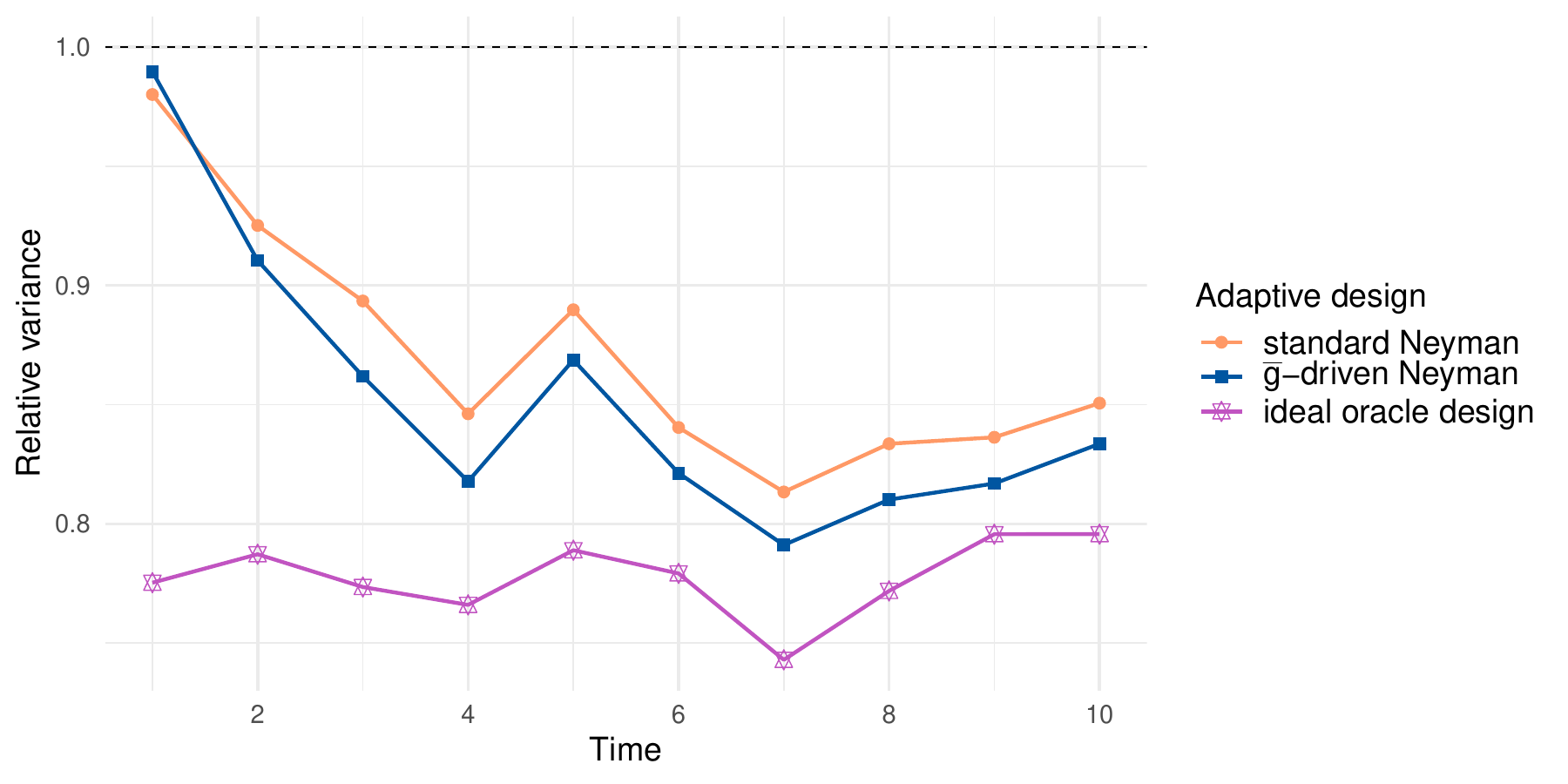}
    \caption{Relative variance of \gbarTMLE in experiments with different designs versus that in the non-adaptive design with a fixed treatment probability of 0.5. Each estimate is constructed using cumulative data up to each time point. The x-axis is the time point of the experiment, and the y-axis is the relative variance.}
    \label{fig:DGD1-design_rel_var}
\end{figure}

Additionally, we evaluate the effects of different designs on semiparametric efficiency for ATE estimation.
Specifically, we construct oracle versions of Neyman-allocation-based adaptive designs, in which the treatment randomization probabilities are based on the oracle Neyman allocation functions using the true conditional variances rather than the estimated ones for units $i \ge n_0$.
Figure \ref{fig:DGD1-squaredEIC_Q0} plots the trajectory of the relative value of $D(Q_0,\gbar_n)^2$ over time for these designs (including the non-adaptive one) against the ideal oracle design that applies the oracle Neyman allocation from the outset. This illustrates how quickly the variance of \gbarTMLE under an adaptive design approaches the semiparametric efficiency bound minimized by the oracle Neyman allocation.
The results show that this quantity under the $\bar{g}$-driven adaptive design converges to the oracle one more rapidly than the standard Neyman allocation, highlighting the advantage of guiding the global average design toward the optimal target.

\begin{figure}
    \centering
    \includegraphics[width=0.8\linewidth]{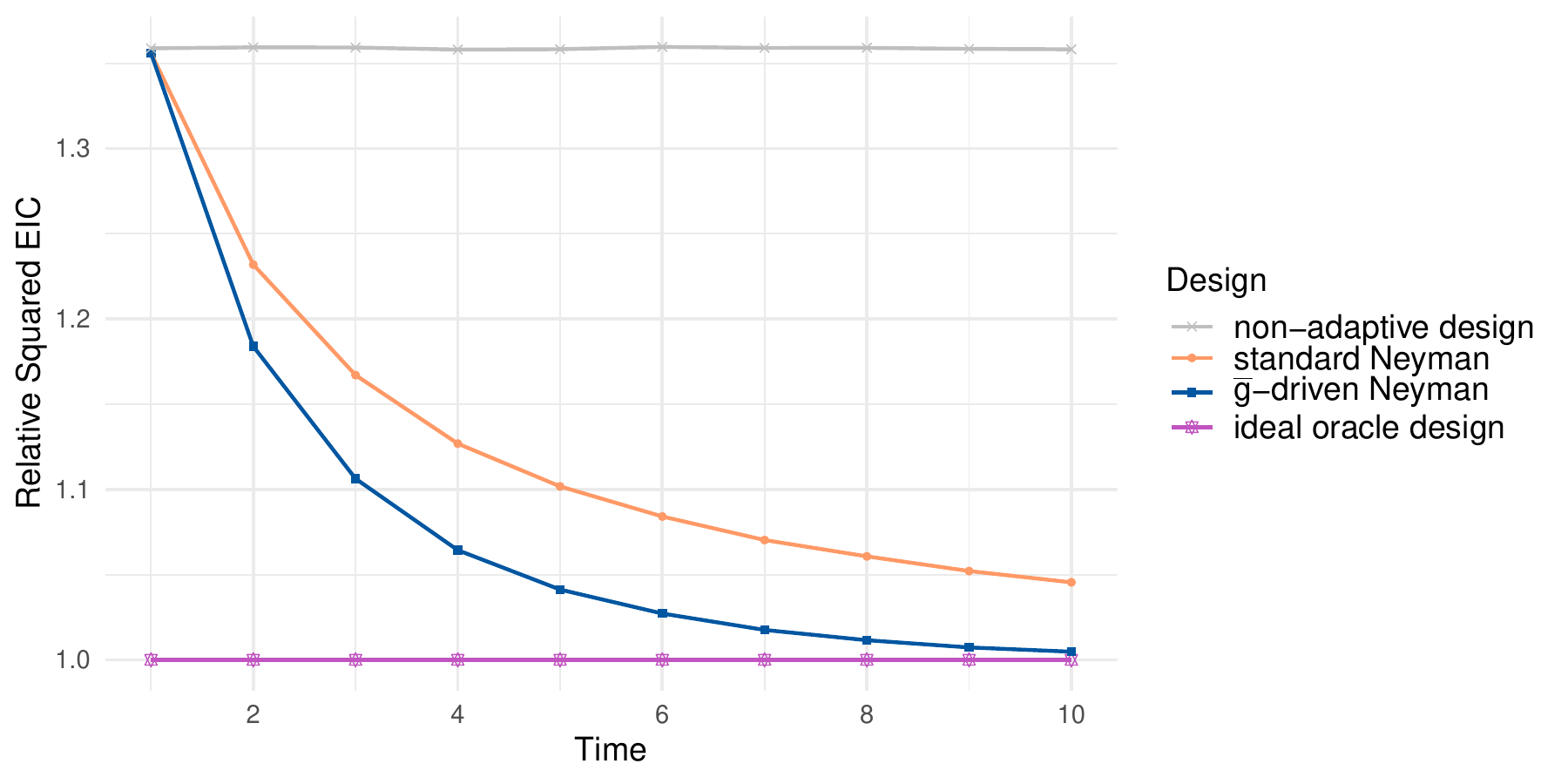}
    \caption{Relative value of $D(Q_0,\bar{g}_n)^2$ under various designs versus that under the ideal design with oracle Neyman allocation (y-axis) across different time points (x-axis). The lines show the trajectories of this quantity over time under non-adaptive design, the oracle version of the adaptive design using standard Neyman allocation and the oracle version of $\bar{g}$-driven Neyman allocation.}
    \label{fig:DGD1-squaredEIC_Q0}
\end{figure}

\section{Application and Generalization}
\label{section:generalization}
\subsection{Application}
Our proposed framework is directly applicable for cases with independent data structure. For example, consider estimating causal estimands from data collected across multiple sites where the site index does not affect the primary outcome but introduces heterogeneity in treatment distribution across sites.
Specifically, define the full data of interest as $X = (W,Y(a)) \sim P^F \in \cM^F$, where $W$ denotes the baseline covariates; $Y(a)$ is the counterfactual outcome with treatment level $a \in \cA$. 
Let the average treatment effect $\Psi^F(P^F)=E_{P^F}[Y(1)-Y(0)]$ be the causal estimand of interest.
Suppose one observe $n$ independent observations $O_1,\cdots,O_n$ with $O_i = (S_i,W_i,A_i,Y_i)$, where $W_i \sim Q_W$ and $Y_i \sim Q_Y(\cdot|A_i,W_i)$ for $Q_W \in \cQ_W$ and $Q_Y \in \cQ_Y$. For every unit $i$, $S_i \in \{1, \dots, J\}$ is the site index that determines the treatment distribution of $A_i \sim g_i(\cdot|W_i)$ with $g_i = g_{s_i}$. Each site $j \in [J]$ has its own treatment assignment mechanism $g_j:\cW \times \cA \to [0,1]$.
Note that each site does not impact the outcome given treatment and baseline covariates, and
$\Psi^F(P^F)$ can be identified by the target estimand $\psi(Q)=E[\bQ(1,W)-\bQ(0,W)]$ under ignorability and positivity assumptions, where $\bQ(A,W) := E[Y|A,W]$.

Define $\gbar_n := \frac{1}{n}\sum_{i=1}^n  g_i$. With the convergence of $\gbar_n$ to a fixed $\gbarinf$ (e.g. $\gbarinf = \sum_{j \in J} p_s(j) g_j$ for a probability mass function $p_s$ of the site index), one can construct an \gbarTMLE by solving the score equation $\frac{1}{n}\sum_{i=1}^n \frac{2A_i-1}{\gbar_n(A_i|W_i)}[Y_i-\bQ_n^*(A_i,W_i)]=0$. This yields an asymptotically normal and efficient estimator that inherits the advantages of \gbarTMLE established for adaptive design settings. This approach also applies to other settings where the site index $S$ is replaced by other variables that do not directly affect the outcome given treatment and other baseline covariates, but only influence the treatment assignment probabilities.

\subsection{The General Framework}
We present the general estimation and design framework for various causal estimands in broader settings, including those with longitudinal data structure.

\textbf{Data, Model and Target Estimand.}
Consider a longitudinal data structure $O = \left(L(0),A(0),L(1),A(1),\cdots,L(K),A(K),L(K+1) \equiv Y\right)$. Let 
\[
\cM(g)=\{ p_{q,g} = (q,\,g) : q \in \cQ \}
\]
be a model with a known \(g\), where 
$
q(o) = \prod_{t=0}^{K+1} q_t(o)$ and $g(o) = \prod_{t=0}^k g_t(o)
$, which corresponds to an i.i.d. likelihood with each $L(t) \sim q_t(\cdot|Pa(L(t)))$ and $A(t) \sim g_t(\cdot|Pa(A(t)))$, where $Pa(L(t))$ and $Pa(A(t))$ denote the set of nodes that appear before $L(t)$ and $A(t)$ in the longitudinal structure of $O$, respectively.
Let $g^* \in \cG^*$ denote a fixed function that defines the causal estimand of interest as $\Psi^F(Q)=E Y_{g^*} = \int_o y\,q(o)\,g^*(o)\,d\mu(o)$. This parameter only depends on the $Q$-component of the likelihood.

Suppose we have \(n\) observations \(\bO(n) = (O_i : i=1,\ldots,n)\) with likelihood
\[
p^n_{q,\bg_{1:n}}(\bo(n)) = \prod_{i=1}^n q(o_i)\,g_i(o_i) = \prod_{i=1}^n \prod_{t=0}^{K+1}  q_t(o_i)  \prod_{t=0}^{K}g_{i,t}(o_i).
\]
This defines the model
$
M^n = \{ p^n_{q,\bg_{1:n}} : q\},
$
where $g_i$ may be specified as known, externally controlled, or partly unknown but estimable from the data.

\textbf{Semiparametrically Efficient Estimation.}  
Define $\gbar_n = \frac{1}{n} \sum_{i=1}^n g_i$ as the average of the whole product \(g_i=\prod_t g_{i,t}\).
Let $D(Q, g)(O_i)$ denote the efficient influence curve in the i.i.d. model $\cM(g)$.
To estimate $E Y_{g^*}$ from data generated by the model $\mathcal{M}^n$, it is useful to consider a reference i.i.d.\ model $\mathcal{M}(\gbar_n)$ with the average design $\gbar_n$. In that i.i.d.\ model, the efficient influence curve for the parameter $\Psi(Q)$ is given by $D(Q, \gbar_n)(O)$. Importantly, any TMLE constructed by solving $n^{-1} \sum_{i=1}^n D(Q, \gbar_n)(O_i)=0$ under the i.i.d.\ model with $\gbar_n$ is also an efficient estimator for the original model $\mathcal{M}^n$ for the observed data, with its asymptotic normality supported by Martingale theories and its semiparametric efficiency provided by the convolution theorem, as shown in Sections \ref{section:asy_normality} and \ref{section:efficiency}. 
This unified estimation framework applies not only to adaptive design settings discussed above, but also to other contexts with longitudinal data structure with dynamic interventions \citep{robins1986new,van2012targeted} and time-series structures \citep{van2018robust,malenica2021adaptive}. For the design problem, the construction of the self-calibrating $\gbar$-driven adaptive design implied by this efficient estimation approach directly mirrors that in Section~\ref{section:adaptive_design}.

\section{Conclusions}
\label{section:conclusion}
In this work, we propose a novel \gbarTMLE framework for estimating a wide class of causal estimands from sequential adaptive experiments.
We establish both asymptotic normality and semiparametric efficiency of \gbarTMLE, and demonstrate superior variance-reduction performance in finite samples. Notably, our framework substantially relaxes the requirements for adaptive experiments that enable valid statistical inference, as it only requires positivity and stabilization for the average design.
Motivated by these results, we further propose a novel $\gbar$-driven adaptive design that dynamically steers the average design toward the estimated optimal design in a stable self-calibrating manner. This adaptive design achieves faster convergence to the oracle design and yields improved estimation efficiency compared with standard approach.
Finally, we provide a general estimation and design framework applicable to broader settings with longitudinal structures.
Future research could explore extensions to other related settings and broader areas of applications.

\newpage
\bibliography{main}

@article{malenica2021adaptive,
  title={Adaptive Sequential Design for a Single Time-Series},
  author={Malenica, Ivana and Bibaut, Aurelien and van der Laan, Mark J},
  journal={arXiv preprint arXiv:2102.00102},
  year={2021}
}

@article{van2007super,
  title={Super learner},
  author={van der Laan, Mark J and Polley, Eric C and Hubbard, Alan E},
  journal={Statistical applications in genetics and molecular biology},
  volume={6},
  number={1},
  year={2007},
  publisher={De Gruyter}
}

@article{chambaz2017targeted,
  title={Targeted sequential design for targeted learning inference of the optimal treatment rule and its mean reward},
  author={Chambaz, Antoine and Zheng, Wenjing and van der Laan, Mark J},
  journal={Annals of statistics},
  volume={45},
  number={6},
  pages={2537},
  year={2017},
  publisher={NIH Public Access}
}

@article{brown1971martingale,
  title={Martingale central limit theorems},
  author={Brown, Bruce M},
  journal={The Annals of Mathematical Statistics},
  pages={59--66},
  year={1971},
  publisher={JSTOR}
}

@article{bibaut2021post,
  title={Post-contextual-bandit inference},
  author={Bibaut, Aur{\'e}lien and Dimakopoulou, Maria and Kallus, Nathan and Chambaz, Antoine and van Der Laan, Mark},
  journal={Advances in neural information processing systems},
  volume={34},
  pages={28548--28559},
  year={2021}
}

@inproceedings{zhan2021off,
  title={Off-policy evaluation via adaptive weighting with data from contextual bandits},
  author={Zhan, Ruohan and Hadad, Vitor and Hirshberg, David A and Athey, Susan},
  booktitle={Proceedings of the 27th ACM SIGKDD Conference on Knowledge Discovery \& Data Mining},
  pages={2125--2135},
  year={2021}
}

@article{chow2008adaptive,
  title={Adaptive design methods in clinical trials--a review},
  author={Chow, Shein-Chung and Chang, Mark},
  journal={Orphanet journal of rare diseases},
  volume={3},
  number={1},
  pages={1--13},
  year={2008},
  publisher={BioMed Central}
}

@article{robertson2023response,
  title={Response-adaptive randomization in clinical trials: from myths to practical considerations},
  author={Robertson, David S and Lee, Kim May and L{\'o}pez-Kolkovska, Boryana C and Villar, Sof{\'\i}a S},
  journal={Statistical science: a review journal of the Institute of Mathematical Statistics},
  volume={38},
  number={2},
  pages={185},
  year={2023},
  publisher={Europe PMC Funders}
}

@book{hu2006theory,
  title={The theory of response-adaptive randomization in clinical trials},
  author={Hu, Feifang and Rosenberger, William F},
  year={2006},
  publisher={John Wiley \& Sons}
}

@article{rosenberger2008handling,
  title={Handling covariates in the design of clinical trials},
  author={Rosenberger, William F and Sverdlov, Oleksandr},
  year={2008}
}

@article{van2015targeted,
  title={Targeted learning of the mean outcome under an optimal dynamic treatment rule},
  author={van der Laan, Mark J and Luedtke, Alexander R},
  journal={Journal of causal inference},
  volume={3},
  number={1},
  pages={61--95},
  year={2015},
  publisher={De Gruyter}
}

@article{van2008construction,
  title={The construction and analysis of adaptive group sequential designs},
  author={van der Laan, Mark J},
  year={2008},
  publisher={bepress}
}

@article{zhang2007asymptotic,
  title={Asymptotic properties of covariate-adjusted response-adaptive designs},
  author={Zhang, Li-Xin and Hu, Feifang and Cheung, Siu Hung and Chan, Wai Sum},
  year={2007}
}

@article{zhu2015covariate,
  title={Covariate-adjusted response adaptive designs incorporating covariates with and without treatment interactions},
  author={Zhu, Hongjian},
  journal={Canadian Journal of Statistics},
  volume={43},
  number={4},
  pages={534--553},
  year={2015},
  publisher={Wiley Online Library}
}

@article{chambaz2014inference,
  title={Inference in targeted group-sequential covariate-adjusted randomized clinical trials},
  author={Chambaz, Antoine and van der Laan, Mark J},
  journal={Scandinavian Journal of Statistics},
  volume={41},
  number={1},
  pages={104--140},
  year={2014},
  publisher={Wiley Online Library}
}

@article{robins1986new,
  title={A new approach to causal inference in mortality studies with a sustained exposure period—application to control of the healthy worker survivor effect},
  author={Robins, James},
  journal={Mathematical modelling},
  volume={7},
  number={9-12},
  pages={1393--1512},
  year={1986},
  publisher={Elsevier}
}

@article{van2018robust,
  title={Robust estimation of data-dependent causal effects based on observing a single time-series},
  author={van der Laan, Mark J and Malenica, Ivana},
  journal={arXiv preprint arXiv:1809.00734},
  year={2018}
}

@article{rosenberger2001covariate,
  title={Covariate-adjusted response-adaptive designs for binary response},
  author={Rosenberger, William F and Vidyashankar, AN and Agarwal, Deepak K},
  journal={Journal of biopharmaceutical statistics},
  volume={11},
  number={4},
  pages={227--236},
  year={2001},
  publisher={Taylor \& Francis}
}

@article{thompson1933likelihood,
  title={On the likelihood that one unknown probability exceeds another in view of the evidence of two samples},
  author={Thompson, William R},
  journal={Biometrika},
  volume={25},
  number={3-4},
  pages={285--294},
  year={1933},
  publisher={Oxford University Press}
}

@article{hadad2021confidence,
  title={Confidence intervals for policy evaluation in adaptive experiments},
  author={Hadad, Vitor and Hirshberg, David A and Zhan, Ruohan and Wager, Stefan and Athey, Susan},
  journal={Proceedings of the national academy of sciences},
  volume={118},
  number={15},
  pages={e2014602118},
  year={2021},
  publisher={National Acad Sciences}
}

@article{luedtke2016statistical,
  title={Statistical inference for the mean outcome under a possibly non-unique optimal treatment strategy},
  author={Luedtke, Alexander R and van der Laan, Mark J},
  journal={Annals of statistics},
  volume={44},
  number={2},
  pages={713},
  year={2016},
  publisher={NIH Public Access}
}

@article{luedtke2016super,
  title={Super-learning of an optimal dynamic treatment rule},
  author={Luedtke, Alexander R and van der Laan, Mark J},
  journal={The international journal of biostatistics},
  volume={12},
  number={1},
  pages={305--332},
  year={2016},
  publisher={De Gruyter}
}

@article{murphy2003optimal,
  title={Optimal dynamic treatment regimes},
  author={Murphy, Susan A},
  journal={Journal of the Royal Statistical Society Series B: Statistical Methodology},
  volume={65},
  number={2},
  pages={331--355},
  year={2003},
  publisher={Oxford University Press}
}

@article{bubeck2012regret,
  title={Regret analysis of stochastic and nonstochastic multi-armed bandit problems},
  author={Bubeck, S{\'e}bastien and Cesa-Bianchi, Nicolo and others},
  journal={Foundations and Trends{\textregistered} in Machine Learning},
  volume={5},
  number={1},
  pages={1--122},
  year={2012},
  publisher={Now Publishers, Inc.}
}

@book{lattimore2020bandit,
  title={Bandit algorithms},
  author={Lattimore, Tor and Szepesv{\'a}ri, Csaba},
  year={2020},
  publisher={Cambridge University Press}
}

@book{hall2014martingale,
  title={Martingale limit theory and its application},
  author={Hall, Peter and Heyde, Christopher C},
  year={2014},
  publisher={Academic press}
}

@article{zhang2024evaluating,
  title={Evaluating and Utilizing Surrogate Outcomes in Covariate-Adjusted Response-Adaptive Designs},
  author={Zhang, Wenxin and Hudson, Aaron and Petersen, Maya and van der Laan, Mark},
  journal={arXiv preprint arXiv:2408.02667},
  year={2024}
}

@book{van1996weak,
  title={Weak convergence and empirical processes: With applications to statistics},
  author={van der Vaart, Aad W and Wellner, Jon A and van der Vaart, Aad W and Wellner, Jon A},
  year={1996},
  publisher={Springer}
}

@article{van2012targeted,
  title={Targeted minimum loss based estimation of causal effects of multiple time point interventions},
  author={van der Laan, Mark J and Gruber, Susan},
  journal={The international journal of biostatistics},
  volume={8},
  number={1},
  year={2012},
  publisher={De Gruyter}
}

@article{rubin2005general,
  title={A general imputation methodology for nonparametric regression with censored data},
  author={Rubin, Dan and van der Laan, Mark J},
  year={2005},
  publisher={bepress}
}

@article{van_der_laan_targeted_2006,
    title = {Targeted {Maximum} {Likelihood} {Learning}},
    volume = {2},
    issn = {1557-4679},
    url = {https://www.degruyter.com/document/doi/10.2202/1557-4679.1043/html},
    doi = {10.2202/1557-4679.1043},
    abstract = {Suppose one observes a sample of independent and identically distributed observations from a particular data generating distribution. Suppose that one has available an estimate of the density of the data generating distribution such as a maximum likelihood estimator according to a given or data adaptively selected model. Suppose that one is concerned with estimation of a particular pathwise differentiable Euclidean parameter. A substitution estimator evaluating the parameter of the density estimator is typically too biased and might not even converge at the parametric rate: that is, the density estimator was targeted to be a good estimator of the density and might therefore result in a poor estimator of a particular smooth functional of the density. In this article we propose a one step (and, by iteration, kth step) targeted maximum likelihood density estimator which involves 1) creating a hardest parametric submodel with parameter epsilon through the given density estimator with score equal to the efﬁcient inﬂuence curve of the pathwise differentiable parameter at the density estimator, 2) estimating this parameter epsilon with the maximum likelihood estimator, and 3) deﬁning a new density estimator as the corresponding update of the original density estimator. We show that iteration of this algorithm results in a targeted maximum likelihood density estimator which solves the efﬁcient inﬂuence curve estimating equation and thereby yields an efﬁcient or locally efﬁcient estimator of the parameter of interest under regularity conditions. We also show that, if the parameter is linear and the model is convex, then the targeted maximum likelihood estimator is often achieved in the ﬁrst step, and it results in a locally efﬁcient estimator at an arbitrary (e.g., heavily misspeciﬁed) starting density. This tool provides us with a new class of targeted likelihood based estimators of pathwise differentiable parameters.},
    language = {en},
    number = {1},
    urldate = {2025-07-11},
    journal = {The International Journal of Biostatistics},
    author = {van der Laan, Mark J. and Rubin, Daniel},
    month = jan,
    year = {2006},
    note = {Publisher: Walter de Gruyter GmbH},
}

@article{kennedy2023towards,
  title={Towards optimal doubly robust estimation of heterogeneous causal effects},
  author={Kennedy, Edward H},
  journal={Electronic Journal of Statistics},
  volume={17},
  number={2},
  pages={3008--3049},
  year={2023},
  publisher={The Institute of Mathematical Statistics and the Bernoulli Society}
}

@inproceedings{bouneffouf2020survey,
  title={Survey on applications of multi-armed and contextual bandits},
  author={Bouneffouf, Djallel and Rish, Irina and Aggarwal, Charu},
  booktitle={2020 IEEE congress on evolutionary computation (CEC)},
  pages={1--8},
  year={2020},
  organization={IEEE}
}

\appendix

\section*{Appendix}

\section{Proofs for Sections \ref{section:asy_normality} and \ref{section:efficiency}}
\label{app:proofs}

\begin{proof}[Proof of Theorem \ref{theorem:tmle_decomposition}]
By applying von Mises expansions with respect to $P_{Q,\gbar_n}$ and $P_{Q_0,g_i}$ across $g_i \in \bg_{1:n}$, we have that
\begin{eqnarray*}
    &&\psi(Q) - \psi(Q_0) \\
    &=&  \frac{1}{n}\sum_{i=1}^n \int D(Q,\gbar_n)(o) d\left(P_{Q, \gbar_n} - P_{Q_0,g_i}\right)(o) \\
    &&+  \frac{1}{n}\sum_{i=1}^n \int\sum_{a=0}^{1}(2a-1)\left(\frac{g_i(a|w)-\gbar_n(a|w)}{\gbar_n(a|w)}\right) \left(\bQ(w)-\bQ_0(w)\right) dQ_W(w) \\
    &=& -\frac{1}{n}\sum_{i=1}^n P_{Q_0, g_i} D(Q,\gbar_n) \\
    &=& -P_{Q_0, \gbar_n} D(Q,\gbar_n).
\end{eqnarray*}
Since \gbarTMLE solves $\frac{1}{n}\sum_{i=1}^n D(Q_n^*,\gbar_n)(O_i)=0$, we have 
\[
\psi(Q_n^*) - \psi(Q_0) = \frac{1}{n}\sum_{i=1}^n [D(Q_n^*,\gbar_n)(O_i) - P_{Q_0, \gbar_n} D(Q_n^*,\gbar_n)].
\] 
By adding and subtracting $D(\Qinf,\gbar_n)(O_i)$, $D(\Qinf,\gbarinf)(O_i)$,
and their conditional expectations given $\bO(i-1)$, we obtain the desired decomposition equation.
\end{proof}

\begin{proof}[Proof of Theorem \ref{theorem:our_LAN}]
The log likelihood ratio of the perturbed distribution and the original distribution follows:
\begin{eqnarray*}
    &&\log\frac{dP^{n}_{Q_{\epsilon_n}^h}(O_1,\cdots,O_n)}{dP^n_{Q}(O_1,\cdots,O_n)} \nonumber \\
    &=& \frac{1}{\sqrt{n}}\sum_{i=1}^n h(O_i) - \frac{1}{2n} \sum_{i=1}^n h^2(O_i) + o_{P^n_{Q}}(1) \\
    &=& \frac{1}{\sqrt{n}}\sum_{i=1}^n h(O_i) - \frac{1}{2n} \sum_{i=1}^n E_{P^n_{Q}}\left[h^2(O_i) \given \bO(i-1)\right] + o_{P^n_{Q}}(1) 
    \label{eq:LAN-average_of_cond_second_moment-h_new}\\
    &=& \frac{1}{\sqrt{n}}\sum_{i=1}^n h(O_i) - \frac{1}{2}\norm{h}_{\cH(Q,\gbar_n)}^2 + o_{P^n_{Q}}(1)     \label{eq:our_LAN_raw_notaion-h_new} \\
    &=& \Delta_{n,h} - \frac{1}{2}\norm{h}_{\cH(Q,\gbarinf)}^2,
\end{eqnarray*}
where we denote $\Delta_{n,h} := \frac{1}{\sqrt{n}}\sum_{i=1}^n h(O_i) + o_{P^n_{Q}}(1)$, with the first term a zero-mean martingale. 
The second equation is based on the convergence of conditional variance of the zero-mean martingale according to Theorem 2.23 of \cite{hall2014martingale}, as $E[h^2(O_i)|\bO(i-1)]$ is bounded.
The third equation is based on the dependence of $g_i$ on the past $\bO(i-1)$ in our adaptive design setup and the definition of $\norm{h}_{\cH(Q,\gbar_n)}^2$.
The last equation relies on the convergence of the adaptive design described in Assumption \ref{assumption:Hilbert_space_approximation} and results in $\Delta_{n,h} \dto \Delta_{h} \sim N(0, \norm{h}_{\cH(Q,\gbarinf)}^2)$ by Martingale central limit theorem.
\end{proof}

\begin{proof}[Proof of Theorem \ref{theorem:efficiency}]
By Assumptions \ref{assumption:strong_positivity_of_average_design},
\ref{assumption:stabilized_variance_gbarinf},  \ref{assumption:reasonable_covering_integral}, \ref{assumption:sigma_N_convergence},
\ref{assumption:reasonable_covering_integral_gbar}, and \ref{assumption:sigma_N_convergence_gbar}, one can apply Theorem \ref{theorem:asympotic_normality_proposed_tmle} to establish asymptotically normality of $\psi(Q_n^*)$ under $P^n_{Q_0}$: 
    \[
    \sqrt{n}[\psi(Q_{n}^*)-\psi(Q_0)] \dto N\left(0,P_{Q_0,\gbarinf}D(Q_0,\gbarinf)^2\right).
    \] 
    With Assumption \ref{assumption:Hilbert_space_approximation}, the log likelihood ratio for the perturbed distribution of $P^n_{Q_0}$ is obtained by $log\frac{dP^{n}_{Q_{\epsilon_n}^h}(O_1,\cdots,O_n)}{dP^n_{Q_0}(O_1,\cdots,O_n)} = \Delta_{n,h}-\frac{1}{2}\norm{h}_{\cH(Q_0,\gbarinf)}^2$, where $h \in \cH(Q_0,\gbarinf)$ and $\Delta_{n,h} \dto \Delta_{h} \sim N(0, \norm{h}_{\cH(Q_0,\gbarinf)}^2)$ under $P^n_{Q_0}$ via Theorem \ref{theorem:our_LAN}.    
    Then, by Le Cam's third lemma (Theorem 3.10.7 in \cite{van1996weak}), we have that
    \[
    \sqrt{n}[\psi(Q_n^*)-\psi(Q_0)] \dto N(<D(Q_0,\gbarinf),h>, P_{Q_0,\gbarinf}D(Q_0,\gbarinf)^2)
    \]
    under $P^{n}_{Q_{\epsilon_n}^h}$.
    Combining with the regularity of the parameter 
    \[
    \sqrt{n} \left(\psi(Q_{\epsilon_n}^{h})-\psi(Q_0)\right) \to \dot{\psi}(Q_0)(h),
    \]
    where $\dot{\psi}(Q_0)(h)=<D(Q_0,\gbarinf),h>$,
    we have that 
    \[
    \sqrt{n} \left(\psi(Q_n^*)-\psi(Q_{\epsilon_n}^{h})\right) \dto N(0,P_{Q_0,\gbarinf}D(Q_0,\gbarinf)^2)
    \]
    under $P^{n}_{Q_{\epsilon_n}^h}$. Hence $\psi(Q_n^*)$ is regular.
    
    By the convolution theorem (Theorem \ref{theorem:LAN_conv}), the asymptotic variance of $\sqrt{n} (T_n-\psi(Q_0))$ for regular estimators $T_n$ is no less than $\norm{\dot{\psi}^*(Q_0)}_{\cH(Q_0,\gbarinf)}^2$, which equals $P_{Q_0,\gbarinf}D(Q_0, \gbarinf)^2$ and is achieved by the asymptotic variance of $\psi(Q_n^*)$. Therefore, $\psi(Q_n^*)$ is efficient.
\end{proof}

\section{Simulation Details}
\label{app:sim_details}
The data generating distribution is as follows:
$
\bQ_0(1,W) = 25 + 10 /\left[1 + exp(-(2W + 1))\right]$;
$
\bQ_0(0,W) = 20 + 17.5/\left[1 + exp(-(W + 0.1))\right]$;
$
Var_{Q_0}[Y|A=1,W]=1 + 3W^3$;
$
Var_{Q_0}[Y|A=0,W]=1 + 1.5\times(3 - W)^3.
$
Figure \ref{fig:DGD_1} shows the true conditional expectation and variance of the outcome given different treatment levels and baseline covariates.
\begin{figure}[tbh]
  \centering
  \includegraphics[width=0.95\textwidth, height=0.6\textwidth]{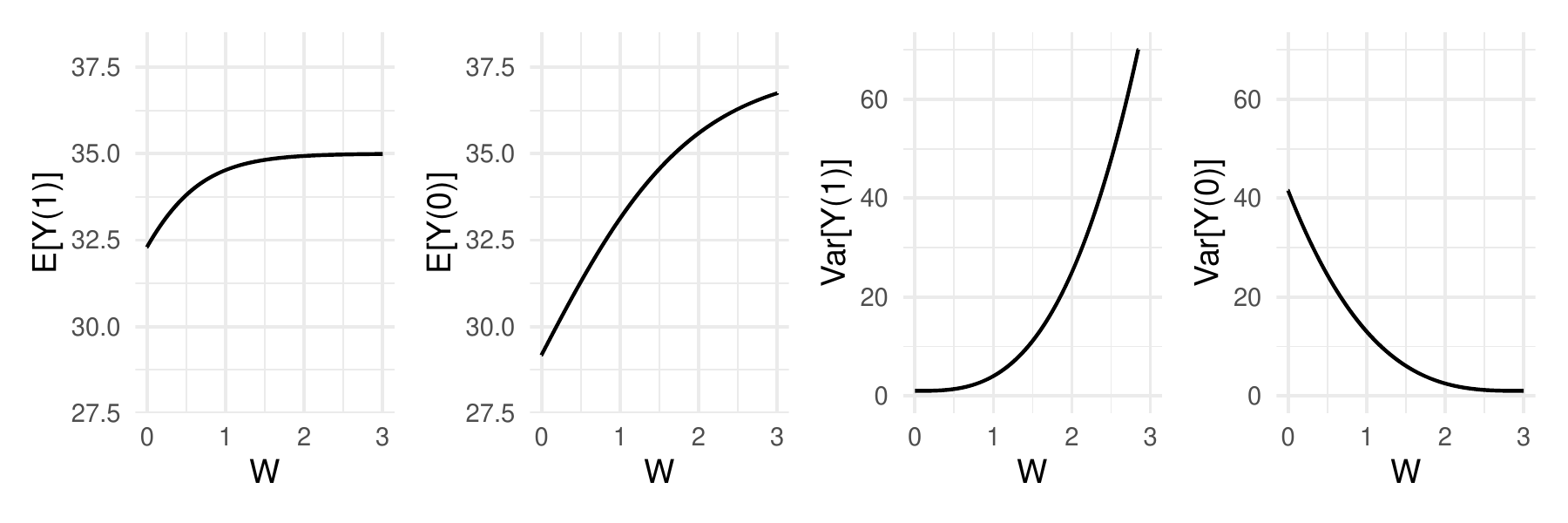}
    \caption{Summary plots of the data generating distribution. The first two plots show the conditional mean of the outcome (y-axis) for treatment ($a=1$) and control ($a=0$) against covariates $W$ (x-axis). The last two plots present the corresponding conditional variances.}
    \label{fig:DGD_1}
\end{figure}
For benefit-driven design, we guide treatment randomization probabilities toward optimal personalized treatment using baseline covariates through estimating the conditional average treatment effect (CATE). Define the true CATE function as $B_{0}: \cW \to \R$, with $B_0(W)=\bQ_0(1,W)-\bQ_0(0,W)$. For a newly enrolled unit $i$ with baseline covariates $w_i$, we use $B_{n}(w_i)$ to denote the estimate of $B_{0}(w_i)$ based on the current data $\bO(i-1)$.
We estimate CATE functions by first transforming the outcomes into doubly robust pseudo-outcomes \citep{rubin2005general,van_der_laan_targeted_2006,van2015targeted,kennedy2023towards}, then regressing them on baseline covariates using Super Learner \citep{van2007super}.
This estimated CATE guides the treatment randomization probabilities toward the estimated optimal personalized treatment $I(B_{n}(w_i)>0)$ through a function $\Gamma_{\nu,b}$ introduced by \cite{chambaz2017targeted}:
\begin{eqnarray*}
\Gamma_{\nu,b}(x)=\nu I(x \leq -b) + (1-\nu) I(x \geq b) + \left(-\frac{1/2 - \nu}{2b^3}x^3 + \frac{1/2 - \nu}{2b/3}x + \frac{1}{2}\right) I\left(-b \leq x \leq b\right).
\end{eqnarray*} 
Here, $B_{n}(w_i)$ is used as input to $\Gamma_{\nu,b}$ to determine the probability of assigning treatment $a=1$; $\nu \in [0,0.5]$ sets a lower bound on the treatment randomization probability, ensuring that each arm is chosen with probability at least $\nu$; $b>0$ controls the range within which these probabilities respond smoothly to the estimated CATE $B_{n}(w_i)$.
When $B_{n}(w_i)>b$, indicating a substantially higher expected outcome under treatment $a=1$ compared with $a=0$, the treatment allocation probability of $a=1$ reaches the maximum $1-\nu$; when $B_{n}(w_i)<-b$, the treatment allocation probability of $a=1$ drops to its minimum value $\nu$. When $B_{n}(w_i) \in [-b,b]$, that probability varies smoothly with the value of CATE. As the estimated CATE approaches to 0.5, the probability gets closer to 0, balancing exploration and exploitation.
In simulation, we set $b=1$ and $\nu_i = exp(-t_i)$ for every unit $i$, where $
t_i= \left\lceil \frac{i - n_0}{250} \right\rceil$ with $n_0 = 1000$, thereby increasing the chance of assigning substantially better treatments as the experiment progresses.

\end{document}